\documentclass{sig-alternate}
\usepackage[paper=letterpaper,width=7in,height=9.25in,centering]{geometry}

\usepackage[ruled,vlined]{algorithm2e}
\usepackage{tikz}
\usepackage{url}
\usepackage{lhelp}

%%%%%%%% macros %%%%%%%%%

\def\C {\ensuremath{\mathbb{C}}}

\def\K {\ensuremath{\mathbf{k}}}

\def\Q {\ensuremath{\mathbb{Q}}}
\def\R {\ensuremath{\mathbb{R}}}
\def\T {\ensuremath{\mathfrak{T}}}
\def\S {\ensuremath{\mathfrak{S}}}
\newcommand{\NN}{\mbox{$N_{\geq}$}}
\newcommand{\PP}{\mbox{$P_{>}$}}
\newcommand{\HH}{\mbox{$H_{\neq}$}}
\def\QQ {\ensuremath{\mathcal{Q}}}

\newcommand{\discrim}[1]{\mbox{{\rm discrim}$(#1)$}}

\newcommand{\init}[1]{\mbox{{\rm init}$(#1)$}}

\newcommand{\mdeg}[1]{\mbox{{\rm mdeg}$(#1)$}}
\newcommand{\mvar}[1]{\mbox{{\rm mvar}$(#1)$}}

\newcommand{\sat}[1]{\mbox{{\rm sat}$(#1)$}}

\newcommand{\oproj}[1]{\mbox{{\rm oproj}$(#1)$}}
\newcommand{\der}[1]{\mbox{{\rm der}$(#1)$}}

\newcommand{\Regularize}[1]{\mbox{{\sf Regularize}$(#1)$}}
\newcommand{\Triangularize}[1]{\mbox{{\sf Triangularize}$(#1)$}}
\newcommand{\RealTriangularize}[1]{\mbox{{\sf RealTriangularize}$(#1)$}}
\newcommand{\BorderPolynomial}[1]{\mbox{{\sf BorderPolynomialSet}$(#1)$}}
\newcommand{\BP}[1]{\mbox{{\sf BorderPolynomial}$(#1)$}}
\newcommand{\GenerateRegularSas}[1]{\mbox{{\sf GenerateRegularSas}$(#1)$}}
\newcommand{\GeneratePreRegularSas}[1]{\mbox{{\sf GeneratePreRegularSas}$(#1)$}}

\newcommand{\LazyRealTriangularize}[1]{\mbox{{\sf LazyRealTriangularize}$(#1)$}}
\newcommand{\GenerateFormula}[1]{\mbox{{\sf GenerateFormula}$(#1)$}}
\newcommand{\RealRootCounting}[1]{\mbox{{\sf RealRootCounting}$(#1)$}}
\newcommand{\ReviseFormula}[1]{\mbox{{\sf Disjunction}$(#1)$}}
\newcommand{\SamplePoints}[1]{\mbox{{\sf SamplePoints}$(#1)$}}
\newcommand{\factor}[1]{\mbox{{\rm factor}$(#1)$}}

\newcommand{\RegularChains}{{\tt RegularChains}}

\def\KK {\ensuremath{\mathbf{K}}}
\def\u {\ensuremath{\mathbf{u}}}
\def\x {\ensuremath{\mathbf{x}}}
\def\y {\ensuremath{\mathbf{y}}}

% THEOREM Environments ---------------------------------------------------
\newtheorem{Theorem}{Theorem}
\newtheorem{Lemma}{Lemma}

\newtheorem{Proposition}{Proposition}

% Notations ----------------------------------------
\newcommand{\mc}[1]{\mathcal{#1}}
\newcommand{\mr}[1]{\mathrm{#1}}
\newcommand{\bb}[1]{\mathbb{#1}}

	%denotion of LR
	%denotion of LR

\def\R {\ensuremath{\mathbb{R}}}
\def\Q {\ensuremath{\mathbb{Q}}}
\def\C {\ensuremath{\mathbb{C}}}

% Operators --------------------------------------------------------------
\DeclareMathOperator{\res}{res}

\DeclareMathOperator{\oaf}{oaf}

%%%%%% end of macros %%%%%%%%%%%%

\newif\ifcomment
\commentfalse
\newif\ifdraft
\draftfalse

\begin{document}

\title{Triangular Decomposition of Semi-algebraic Systems}
\numberofauthors{6} 
\author{
% 1st. author
\alignauthor
Changbo Chen\\
       \affaddr{University of Western Ontario}\\
       \email{cchen252@csd.uwo.ca}
% 2nd. author
\alignauthor
James H. Davenport\\
       \affaddr{University of Bath}\\
       \email{J.H.Davenport@bath.ac.uk}
% 3rd. author
\alignauthor 
John P. May\\
       \affaddr{Maplesoft}\\
       \email{jmay@maplesoft.com}
\and
\alignauthor 
Marc Moreno Maza\\
       \affaddr{University of Western Ontario}\\
       \email{moreno@csd.uwo.ca}
% 5th. author
\alignauthor
Bican Xia\\
       \affaddr{Peking University}\\
       \email{xbc@math.pku.edu.cn}
% 6th. author
\alignauthor
Rong Xiao\\
       \affaddr{University of Western Ontario}\\
       \email{rong@csd.uwo.ca}
}
% There's nothing stopping you putting the seventh, eighth, etc.
% author on the opening page (as the 'third row') but we ask,
% for aesthetic reasons that you place these 'additional authors'
% in the \additional authors block, viz.

%\date{6 May 2010}
% Just remember to make sure that the TOTAL number of authors
% is the number that will appear on the first page PLUS the
% number that will appear in the \additionalauthors section.

\conferenceinfo{ISSAC 2010,}{25--28 July 2010, Munich, Germany.}
\CopyrightYear{2010}
\crdata{978-1-4503-0150-3/10/0007}

\maketitle
%%%%%%% abstract %%%%%%%%%%%%%
\begin{abstract}
Regular chains and triangular decompositions are
fundamental and well-developed tools for describing the complex solutions 
of polynomial systems. 
This paper proposes adaptations of these tools focusing
on solutions of the real analogue:
semi-algebraic systems.

We show that any such system can be decomposed into finitely many 
{\em regular semi-algebraic systems}.
We propose two specifications of such a decomposition
and present corresponding algorithms.
Under some assumptions, one type of decomposition can be computed in singly 
exponential time w.r.t.\ the number of variables.  
We implement our algorithms and 
the experimental results illustrate their effectiveness.
\end{abstract}
%%%%%%% end of abstract %%%%%%%%%%%%%

\category{I.1.2}{Symbolic and Algebraic Manipulation}{Algorithms}
[Algebraic algorithms, Analysis of algorithms]
\terms{Algorithms, Experimentation, Theory}
\keywords{regular semi-algebraic system, regular chain, 
triangular decomposition, border polynomial, fingerprint polynomial set}

%%%%%%% introduction %%%%%%%%%%%%%
\section{Introduction}
Regular chains, the output of triangular decompositions of systems 
of polynomial equations, enjoy remarkable properties.
Size estimates play in their favor~\cite{DaSKac09}
and permit the design of modular~\cite{DMSWX05a} and
fast~\cite{LiMorenoPan09} methods for computing
triangular decompositions.
These features stimulate the development of algorithms
and software for solving polynomial systems
via triangular decompositions.

For the fundamental case of 
semi-algebraic systems
with rational number coefficients, to which this paper is devoted,
we observe that several algorithms for studying the real solutions
of such systems take advantage of the structure of a regular chain.
Some are specialized to isolating the real solutions of
systems with finitely many complex 
solutions~\cite{xz06,CGY07,BoulierChenLemaireMorenoMaza09}.
Other algorithms deal with parametric
polynomial systems via real root classification ({\small RRC})~\cite{yhx01}
or with arbitrary systems via cylindrical
algebraic decompositions ({\small CAD})~\cite{CMXY09}.

In this paper,
we introduce the notion of a {\em regular semi-algebraic system}, 
which in broad terms  is the ``real''
counterpart of the notion of a regular chain.
Then we define two notions of a 
{\em decomposition of a semi-algebraic system}:
one that we call {\em lazy triangular decomposition}, 
where the ana\-lysis of components of strictly smaller dimension 
is deferred, and one that we call 
 {\em full triangular decomposition} 
%% (or {\em full triangular decomposition} for emphasizing this property)
where all cases are worked out.
These decompositions are 
obtained by combining triangular decompositions
of algebraic sets over the complex field with a special 
Quantifier Elimination ({\small QE}) method
based on {\small RRC} techniques.

\smallskip\noindent{\small \bf Regular semi-algebraic system.}
%Let $T$ be a regular chain of $\Q[\x]$, where $\x = x_1 < \cdots < x_n$
%are ordered variables.
Let $T$ be a regular chain of $\Q[x_1, \ldots, x_n]$ for some
ordering of the variables $\x = x_1, \ldots, x_n$.
Let $\u = u_1, \ldots, u_d$ and $\y = y_1, \ldots, y_{n-d}$ 
designate respectively the
variables of $\x$ that are free and algebraic w.r.t.\ $T$.
Let $P \subset \Q[\x]$ be finite such that
each polynomial in $P$ is regular w.r.t.\ the saturated ideal of $T$.
Define $\PP :=\{p>0\mid p\in P\}$.
Let $\QQ$ be a quantifier-free formula of $\Q[\x]$
involving only the variables of $\u$.
We say that $R := [\QQ, T, \PP]$ is a {\em regular semi-algebraic system} if:
\begin{itemizeshort}
\item[$(i)$] $\QQ$ defines a non-empty open 
             semi-algebraic set $S$ in $\R^d$,
\item[$(ii)$] the regular system $[T, P]$ specializes well at every point $u$ of $S$
    (see Section~\ref{sect:preliminary} for this notion),
\item[$(iii)$] at each point $u$ of $S$, 
the specialized system $[T(u), P(u)_{>}]$ 
has at least one real zero.
\end{itemizeshort}
The zero set of $R$, denoted by ${Z}_{\R}(R)$,
is defined as
the set of points $(u, y) \in {\R}^d \times {\R}^{n-d}$ such that 
$\QQ(u)$ is true and $t(u, y)=0$, $p(u, y)>0$, for all $t\in T$
and all $p\in P$.

\smallskip\noindent{\small \bf Triangular decomposition 
of a semi-algebraic system.}
In Section~\ref{sect:specification} we show
that the zero set of any semi-algebraic system $\S$ can be decomposed 
as a finite union (possibly empty) 
of zero sets of regular semi-algebraic systems.
We call such a decomposition a {\em full triangular decomposition} 
(or simply {\em triangular decomposition} when clear from context)
of  $\S$, 
and denote by {\sf RealTriangularize} an algorithm to compute it.
The proof of our statement relies on triangular decompositions
in the sense of Lazard 
(see Section~\ref{sect:preliminary} for this notion)
for which it is not known whether or not they  
can be computed in singly exponential time w.r.t.\ the number of variables.
Meanwhile, we are hoping to obtain an algorithm
for decomposing semi-algebraic systems 
(certainly under some genericity assumptions)
that would fit in that complexity class.
Moreover, we observe that, in practice, 
full triangular decompositions
are not always necessary and that 
providing information about the components of maximum
dimension is often sufficient.
These theoretical and practical motivations 
lead us to a 
weaker notion of a decomposition
of a semi-algebraic system.

\newcommand{\BB}{\mbox{$B_{\neq}$}}

\smallskip\noindent{\small \bf Lazy triangular decomposition 
of a semi-algebraic system.}
Let $\S=[F,\NN,\PP,\HH]$ (see Section~\ref{sect:specification}
for this notation) be a semi-algebraic system
of  ${\Q}[\x]$ and $Z_{\R}(\S) \subseteq {\R}^n$
be  its zero set.
Denote by $d$ the dimension of the constructible  set
$
\{x\in\C^n\mid f(x)=0,g(x)\neq0,\mbox{~for all~} f\in F, g\in P\cup H\}.
$ 
A finite set of regular semi-algebraic systems $R_i$, $i=1\cdots t$
is called a \emph{lazy triangular decomposition} %(LTD) 
of $\S$ if
\begin{itemizeshort}
\item  $\cup_{i=1}^t Z_{\R}(R_i)\subseteq Z_{\R}(\S)$ holds, and
\item there exists $G \subset {\Q}[\x]$ such that 
     the real-zero set $Z_{\R}(G) \allowbreak\subset {\R}^n$
     contains 
$Z_{\R}(\S) \setminus \left(\cup_{i=1}^{t}Z_{\R}({R}_i)\right)$
    and the complex-zero set $V(G) \subset \bb{C}^n$
    either is empty or
   has dimension less than $d$.
\end{itemizeshort}
We denote by {\sf LazyRealTriangularize} an algorithm 
computing such a decomposition. 
In the implementation presented hereafter,
{\sf LazyRealTriangularize} 
outputs additional information in order to 
continue the computations and obtain
a full triangular decomposition, if needed.
This additional information appears in the form
of {\em unevaluated function calls}, explaining
the usage of the adjective {\em lazy}
in this type of decompositions.

\smallskip\noindent{\small {\bf Complexity results for lazy 
triangular decomposition.}}
In Section~\ref{sec:BPRTD}, we provide a running time 
estimate for computing a lazy triangular decomposition
of the semi-algebraic system $\S$ 
when $\S$  has no inequations nor inequalities,
(that is, when $\NN = \PP = \HH = \emptyset$ holds)
and when $F$ generates a strongly equidimensional ideal of dimension $d$.
We show that one can compute such a decomposition
in time singly exponential w.r.t.\ $n$.
Our estimates are not sharp and are just meant to reach
a singly exponential bound.
We rely on the work of J. Renagar~\cite{Ren92} for {\small QE}. 
In Sections~\ref{sec:FPS} and \ref{sect:Algorithm}
we turn our attention to algorithms
that are more suitable for implementation even though
they rely on sub-algorithms with a doubly exponential 
running time w.r.t.\ $d$.

\smallskip\noindent{\small \bf A special case of quantifier elimination.}
By means of triangular decomposition of algebraic sets over {\C},
triangular decomposition
of semi-algebraic systems  (both full and lazy)  reduces
to a special case of {\small QE}.
In Section~\ref{sec:FPS}, we implement this latter step 
via the concept of a {\em fingerprint polynomial set}, 
which is inspired by that of a {\em discrimination polynomial set}
used for {\small RRC} in~\cite{yhx01,Xiao09}.

\smallskip\noindent{\small \bf Implementation and experimental results.}
In Section~\ref{sect:Algorithm} we describe the algorithms
that we have implemented for computing triangular decompositions
(both full and lazy) 
of semi-algebraic systems.
Our {\sc Maple} code is written
on top of the {\tt RegularChains} library.
%In Section~\ref{sect:implementation},
We provide experimental data for two groups of well-known problems. 
In the first group, each input semi-algebraic 
system consists of equations only
while the second group is a collection of {\small QE} problems.
To illustrate the difficulty of our test problems,
and only for this purpose, 
we provide timings obtained with other well-known 
polynomial system solvers which are based on algorithms
whose running time estimates are comparable to ours.
For this first group we use the {\sc Maple}
command {\tt Groebner:-Basis} for computing
lexicographical Gr\"obner bases. 
%%Indeed, a triangular decomposition of an algebraic
%%set is essentially a factored Gr\"obner basis~\cite{Laz92}.
For the second group we use 
a general purpose {\small QE} software:
{\sc qepcad b} (in its non-interactive mode)~\cite{Bro03}.
Our experimental results show that our {\sf LazyRealTriangularize}
code can solve most of our test problems 
and that it can solve more problems than the 
package it is compared to.

We conclude this introduction by 
 computing a 
triangular decomposition of a particular semi-algebraic system
taken from ~\cite{Brown05}.
Consider the following question: when does 
$p(z)=z^3+az+b$ have a non-real root $x+iy$
satisfying $xy<1$~?
This problem can be expressed as
$(\exists x)(\exists y)[f=g=0\wedge y\neq0\wedge xy-1<0]$, where
$f={\rm Re}(p(x+iy)) = x^3-3xy^2+ax+b$
and
$g = {\rm Im}(p(x+iy))/y = 3x^2-y^2+a$.

We call our {\sf LazyRealTriangularize} command
on the semi-algebraic system 
$f = 0, g = 0, y\neq0, xy-1<0$
with the variable order $y>x>b>a$.
Its first step is to call the {\sf Triangularize} 
command of the {\tt RegularChains}
library on the algebraic system $f =  g = 0$.
We obtain one squarefree regular chain $T = [t_1, t_2]$, 
where $t_1=g$ and $t_2 = 8x^3+2ax-b$, satisfying $V(f, g) = V(T)$.
The second step of {\sf LazyRealTriangularize}
is to check whether the polynomials
defining inequalities and inequations are regular w.r.t.\
the saturated ideal of $T$, which is the case here.
The third step is to compute the so called
{\em border polynomial set} (see Section~\ref{sect:preliminary})
which is 
$B=[h_1,h_2]$ with 
$h_1=4a^3+27b^2$
and
$h_2=-4a^3b^2-27b^4+16a^4+512a^2+4096$.
One can check that the regular system $[T, \{y, xy-1\}]$ 
specializes well outside of the hypersurface $h_1 h _2 = 0$.
The fourth step is to compute the fingerprint polynomial set
which yields the quantifier-free formula $\QQ = h_1>0$
telling us that $[\QQ, T, 1-xy > 0]$ is a regular semi-algebraic system.
After performing these four steps,
(based on 
Algorithm~\ref{Algo:LazyRealTriangularize}, 
Section~\ref{sect:Algorithm}) 
the function call
$$
\LazyRealTriangularize{[f, g, y\neq0, xy-1<0], [y, x, b, a]}
$$ 
in our implementation returns the following:
{\small
$$
\left\{
\begin{array}{ll}
[\{t_1= 0, t_2 = 0, 1-xy>0, h_1>0\}] & h_1h_2\neq0\\
  &\\
\mbox{{\tt \%LazyRealTriangularize}}
([t_1=0, t_2=0, f=0,&\\
 h_1=0, 1-xy>0, y\neq0],[y, x, b, a]) & h_1=0\\
  &\\
\mbox{{\tt \%LazyRealTriangularize}}
([t_1=0, t_2=0, f=0,&\\
 h_2=0, 1-xy>0, y\neq0],[y, x, b, a]) & h_2=0\\
\end{array}
\right.
$$
}
The above output shows that $\{ [\QQ, T, 1-xy > 0] \}$ 
forms a lazy triangular decomposition of the input semi-algebraic system.
Moreover, together with the output of the recursive calls,
one obtains  a full triangular decomposition.
Note that the cases of the two recursive calls
correspond to $h_1=0$ and  $h_2=0$.
Since our {\sf LazyRealTriangularize} uses the 
{\sc Maple} piecewise structure for formatting its output,
one simply needs to evaluate 
the recursive calls with the \verb+value+ command, yielding 
the same result as directly calling {\sf RealTriangularize}

{\small
$$
\left\{
\begin{array}{ll}
 [\{t_1= 0, t_2 = 0, 1-xy>0, h_1>0\}] & h_1h_2\neq0\\
  &\\
 {[~]}& h_1=0\\
  &\\
 \left\{
 \begin{array}{ll}
  ([\{t_3= 0, t_4 = 0, &\\
 h_2=0, 1-xy>0\}])&h_3\neq0\\
  &\\
 {[~]}&h_3=0
 \end{array}
 \right.                                                       & h_2=0
\end{array}
\right.
$$
}
where $t_3 =xy+1$, $t_4 =2a^3x-a^2b+32ax-48b+18xb^2$,
$h_3  = (a^2+48)(a^2+16)(a^2+12)$.

From this output, after some simplification, one could obtain the
equivalent quantifier-free formula, $4a^3+27b^2>0$, of the original {\small QE} problem.
%%%%%%%% end of introduction %%%%%%%%%%%%%%%%%%%

%%%%%%%% preliminary %%%%%%%%%%%%%%

%%%%%%%%%%%%%%%%%%%%%
\section{Triangular decomposition  of \\ Algebraic Sets}
%%%%%%%%%%%%%%%%%%%%%%%
\label{sect:preliminary}
We review in the section the basic notions related
to regular chains and triangular decompositions
of algebraic sets.
Throughout this paper, let $\K$ be a field of 
characteristic 0 and
$\KK$ be its algebraic closure.
Let $\K[\x]$ be the polynomial ring over 
$\K$ and with ordered variables $\x= x_1 < \cdots < x_n$. 
Let $p,q \in {\K}[\x]$ be polynomials. 
Assume that $p\notin\K$.
%If $p$ is not constant, then
Then denote by
$\mvar{p}$, $\init{p}$, and $\mdeg{p}$ 
%denote respectively
 respectively
the greatest variable appearing in $p$ 
(called the {\em main variable} of $p$),
the leading coefficient of $p$ w.r.t.\ $\mvar{p}$
(called the {\em initial} of $p$), and
the degree of $p$ w.r.t.\ $\mvar{p}$
(called the {\em main degree} of $p$);
denote by $\der{p}$ the derivative of $p$ w.r.t.\ $\mvar{p}$;
denote by $\discrim{p}$ the discriminant of $p$
w.r.t.\ $\mvar{p}$.

\smallskip\noindent{\small \bf Triangular sets}.
Let $T \subset {\K}[\x]$ be a {\em triangular set},
that is, a set of non-constant polynomials with pairwise
distinct main variables. 
Denote by $\mvar{T}$ the set of
main variables of the polynomials in $T$. 
A variable $v$ in $\x$ is called
{\em algebraic} w.r.t.\ $T$ if $v\in\mvar{T}$, 
otherwise it is said {\em free} w.r.t.\ $T$. 
If no confusion is possible, we shall always denote by
$\u=u_1,\ldots,u_d$ and $\y=y_1,\ldots,y_m$ 
respectively the free and the main variables of $T$.
\ifcomment{
For $v\in\mvar{T}$, denote by $T_v$ the polynomial in $T$
with main variable $v$.
For $v \in \x$,
we denote by $T_{<v}$ the set of 
polynomials $t \in T$ such that $\mvar{t} < v$.
}\fi
Let $h_T$ be the product
of the initials of the polynomials in $T$.
We denote by \sat{T} the {\em saturated ideal} of $T$: 
if $T$ is the empty triangular set, then \sat{T} is defined as the trivial
ideal $\langle 0 \rangle$, otherwise it is the ideal
$\langle T \rangle:h_{T}^{\infty}$.
The {\em quasi-component} $W(T)$ of $T$
is defined as $V(T) \setminus V(h_T)$.
Denote $\overline{W(T)} = V(\sat{T})$ as the Zariski closure of $W(T)$.

\smallskip\noindent{\small \bf Iterated resultant}.
Let $p$ and $q$ be two polynomials of $\K[\x]$. 
Assume $q$ is non-constant and let $v=\mvar{q}$.
We define $\res(p, q, v)$ as follows:
if $v$ does not appear in $p$, 
then $\res(p, q, v):=p$; otherwise $\res(p, q, v)$
is the resultant of $p$ and $q$ w.r.t.\ $v$.
Let $T$ be a triangular set of $\K[\x]$.
We define $\res(p, T)$ by induction:
if $T$ is empty, then $\res(p, T)=p$;
otherwise let $v$ be the greatest variable appearing in $T$, then 
$\res(p, T)=\res(\res(p, T_v, v), T_{<v})$, 
where $T_v$ and $T_{<v}$ denote respectively the polynomials
of $T$ with main variables equal to and less than $v$.

\smallskip\noindent{\small \bf Regular chain}.
A triangular set $T \subset \K[\x]$ is a {\em regular chain} 
if: either $T$ is empty; 
or (letting $t$ be  the polynomial in $T$ with maximum main variable),
 $T \setminus \{ t\}$ is a regular chain,
and the initial of $t$ is regular w.r.t.\ 
\sat{T \setminus \{ t \}}.
The empty regular chain is denoted by $\varnothing$.
Let $H\subset\K[\x]$.
The pair $[T, H]$ is a {\em regular system} if each polynomial in $H$
is regular modulo $\sat{T}$. 
%If $h$ is the only element in $H$, we also write as $[T,h]$ for short.
A regular chain $T$ or a regular system $[T,H]$, is {\em squarefree} 
if for all $t\in T$, the $\der{t}$ is regular w.r.t.\ $\sat{T}$.

\smallskip\noindent{\small \bf Triangular decomposition}.
Let $F\subset\K[\x]$. Regular chains $T_1,\ldots,T_e$ of $\K[\x]$
form a {\em triangular decomposition} of $V(F)$ if either: 
$V(F)=\cup_{i=1}^e \overline{W(T_i)}$ (Kalkbrener's sense) or 
$V(F)=\cup_{i=1}^e W(T_i)$ (Lazard's sense).
In this paper, we denote by {\sf Triangularize} an algorithm,
such as the one of~\cite{MMM99}, 
computing a triangular decomposition of the former kind.

\smallskip\noindent{\small \bf Regularization.}
Let $p\in\K[\x]$.
Let $T$ be a regular chain of $\K[\x]$.
Denote by $\Regularize{p, T}$ an operation which computes  
a set of regular chains $\{T_1,\ldots, T_e\}$ such that 
$(1)$ for each $i, 1\leq i\leq e$, either $p\in\sat{T_i}$ or $p$ is
regular w.r.t.\ $\sat{T_i}$; 
$(2)$ 
we have $\overline{W(T)}=\overline{W(T_1)}\cup\cdots\cup\overline{W(T_e)}$,
$\mvar{T}=\mvar{T_i}$ and $\sat{T}\subseteq\sat{T_i}$ for $1\leq i\leq e$.

\smallskip\noindent{\small \bf Good specialization~\cite{CGLMP07}}. 
Consider a squarefree regular system $[T, H]$ of $\K[\u,\y]$.
Recall that $\y$ and $\u =  u_1, \ldots, u_d$ 
stand respectively for $\mvar{T}$ and $\x \setminus \y$.
Let $z = (z_1, \ldots, z_d)$ be a point of ${\KK}^d$.
We say that $[T, H]$ {\em specializes well} at $z$ if:
$(i)$ none of the initials of the polynomials in $T$ vanishes
modulo the ideal $\langle z_1 - u_1, \ldots, z_d - u_d \rangle$;
$(ii)$ the image of $[T, H]$ modulo 
              $\langle z_1 - u_1, \ldots, z_d - u_d \rangle$
              is a squarefree regular system.

\smallskip\noindent{\small \bf Border polynomial~\cite{yhx01}}. 
Let $[T,H]$ be a squarefree regular system of $\K[\u,\y]$.
Let $bp$ be the primitive and square free part of 
the product of all
$\res(\der{t}, T)$ and all $\res(h, T)$ for $h\in H$ and $t\in T$.
%$$
%bp:=\Pi_{t\in T}\{\res(\init{t}, T), \res(\discrim{t}, T)\}\Pi_{h\in H}\res(h, T).
%$$
We call $bp$ the {\em border polynomial} of $[T, H]$
and denote by ${\sf BorderPolynomial}(T,H)$ an algorithm to compute it. 
We call the set of irreducible factors of $bp$ the
{\em border  polynomial set} of $[T, H]$.
Denote by ${\sf BorderPolynomialSet}(T, H)$ an algorithm
to compute it.
Proposition~\ref{Proposition:borderpolynomial}
%which can be deduced from 
follows from 
the specialization property of subresultants
and states a fundamental property of border polynomials.

\begin{Proposition}
%%%%%%%%%%%%%%%%%%%
\label{Proposition:borderpolynomial}
%%%%%%%%%%%%%%%%%%%
The system $[T, H]$ specializes well at $u\in\KK^d$ if and only if 
the border polynomial $bp(u)\neq0$.
\end{Proposition}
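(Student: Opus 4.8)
The plan is to trace carefully the definition of ``specializes well'' and show each of its two clauses is controlled by a factor of $bp$. Recall that $bp$ is (up to primitive squarefree part) the product over $t \in T$ of $\res(\der{t}, T)$ together with the product over $h \in H$ of $\res(h, T)$. First I would record the basic behaviour of iterated resultants under specialization: if $\phi_z$ denotes reduction modulo $\langle z_1 - u_1, \ldots, z_d - u_d\rangle$, then for a polynomial $p$ and the regular chain $T$, one has $\phi_z(\res(p, T)) = \res(\phi_z(p), \phi_z(T))$ \emph{provided} none of the initials of the polynomials of $T$ vanishes at $z$ — this is exactly the specialization property of subresultants invoked just before the statement. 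So the argument naturally splits according to whether clause $(i)$ of good specialization holds.

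For the forward direction, suppose $[T,H]$ specializes well at $z$. Then by clause $(i)$ no initial of $T$ vanishes at $z$, so each $\init{t}(z) \neq 0$, and consequently the main variables and main degrees of the polynomials of $T$ are preserved under $\phi_z$; in particular $\phi_z(T)$ is again a triangular set with the same $\mvar$'s. The specialization formula above then gives $\phi_z(\res(\der{t}, T)) = \res(\der{\phi_z(t)}, \phi_z(T))$ and $\phi_z(\res(h, T)) = \res(\phi_z(h), \phi_z(T))$. By clause $(ii)$, $\phi_z([T,H])$ is a squarefree regular system: squarefreeness means each $\der{\phi_z(t)}$ is regular modulo $\sat{\phi_z(T)}$, hence $\res(\der{\phi_z(t)}, \phi_z(T))$ is a nonzero constant (it is the iterated resultant of a regular polynomial against the chain, which cannot be zero); and regularity of each $\phi_z(h)$ forces $\res(\phi_z(h), \phi_z(T)) \neq 0$ likewise. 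Multiplying, each factor of $bp$ specializes to a nonzero value at $z$, so $bp(z) \neq 0$ (taking primitive squarefree part does not introduce new zeros among the common roots of the factors — one should note $bp$ vanishes at $z$ iff the full product does).

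For the converse, suppose $bp(z) \neq 0$. In particular every $\init{t}$, being a factor-divisor inside the various resultants — more precisely, one checks $\init{t}$ divides $\res(\der{t},T)$ or handles the degenerate low-degree case separately — does not vanish at $z$; the clean way is: if some $\init{t}(z) = 0$ then one shows directly that $\res(\der{t}, T)$ or a suitable factor of $bp$ vanishes at $z$, contradicting $bp(z) \neq 0$. Hence clause $(i)$ holds, and the specialization formula applies. Then $\phi_z(T)$ is triangular with preserved main variables, $\res(\der{\phi_z(t)}, \phi_z(T)) = \phi_z(\res(\der{t},T)) \neq 0$ for all $t$, and $\res(\phi_z(h), \phi_z(T)) \neq 0$ for all $h \in H$. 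The first batch of nonvanishing, read through the tower of residue fields $\K \to \K[u_1,\ldots,u_d]/\langle z-u\rangle$ and up the chain, certifies that $\phi_z(T)$ is a squarefree regular chain; the second certifies that each $\phi_z(h)$ is regular modulo $\sat{\phi_z(T)}$. Together this is precisely clause $(ii)$, so $[T,H]$ specializes well at $z$.

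The main obstacle is the precise bookkeeping of the case $\init{t}(z) = 0$: one must confirm that a vanishing initial is always detected by $bp$ — i.e. that the relevant iterated resultant (or one of its irreducible factors) genuinely picks up that locus — rather than being silently cancelled when passing to the primitive squarefree part. This requires knowing that $\init{t}$ appears, with the expected multiplicity pattern, inside $\res(\der{t}, T)$ (using $\deg(\der{t}) = \mdeg{t} - 1$ and the structure of resultants of a polynomial and its derivative), and that the iterated construction propagates this down the chain $T_{<v}$. Once this ``detection lemma'' is in place, the rest is the symmetric application of the subresultant specialization theorem in both directions as sketched above.
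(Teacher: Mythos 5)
Your argument is correct and follows exactly the route the paper itself indicates: the paper offers no detailed proof, saying only that the proposition ``follows from the specialization property of subresultants,'' and your two-directional application of that property---together with the observation that $\res(\der{t},t,\mvar{t})=\pm\,\init{t}\cdot\discrim{t}$ so that vanishing initials are detected by the iterated resultants (including the $\mdeg{t}=1$ case where $\der{t}=\init{t}$ directly)---is precisely the bookkeeping the paper leaves implicit. Your honest flagging of the ``detection lemma'' is apt, and the sketch you give for it (all rows of the Sylvester matrix coming from a polynomial that specializes to zero force the iterated resultant to vanish) does close that gap.
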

%%%%%%%% end of preliminary %%%%%%%%%%%%%%

%%
%%
%%%%%%%% specification %%%%%%%%%%%%%%

\section{Triangular decomposition of \\
semi-algebraic systems}
%%%%%%%%%%%
\label{sect:specification}
%%%%%%%%%%%
In this section, we  prove that any semi-algebraic system can be decomposed 
into finitely many regular semi-algebraic systems.
This latter notion was defined in the introduction.

\smallskip\noindent{\small \bf Semi-algebraic system}. 
Consider four finite polynomial subsets
$F=\{f_1,f_2,\cdots,f_s\}$,
$N = \{n_1,n_2,\cdots,n_t\}$,
$P=\{p_1,p_2,\cdots,p_r\}$,
and $H=\{h_1,h_2,\cdots,h_{\ell}\}$ 
of ${\Q}[x_1, \ldots, x_n]$.
Let $\NN$ denote the set of non-negative inequalities 
$\{n_1 \geq 0, \ldots, n_t \geq 0\}$.
Let $\PP$ denote the set of positive inequalities
$\{p_1 > 0, \ldots, p_r > 0\}$.
Let $\HH$ denote the set of inequations $\{h_1\neq0,\ldots, h_{\ell}\neq0\}$.
We will denote by $[F,\PP]$ the {\em basic semi-algebraic system}
$\{f_1 = 0, \ldots, f_s = 0, p_1 > 0, \ldots, p_r  > 0\}$.
We denote by $\S=[F,\NN,\PP,\HH]$
the semi-algebraic system ({\small SAS}) 
which is the conjunction of the following conditions:
$f_1 = 0, \ldots, f_s = 0$,
$n_1 \geq 0, \ldots, n_t \geq 0$,
$p_1 > 0, \ldots, p_r > 0$ and
$h_1\neq 0, \ldots, h_{\ell} \neq 0$.

\smallskip\noindent{\small \bf Notations on zero sets}. 
In this paper, we use ``$Z$'' to denote the zero set of 
a polynomial system, involving equations and inequations, in $\C^n$ 
and ``$Z_{\R}$'' to denote the zero set
of a semi-algebraic system in $\R^n$.

\smallskip\noindent{\small \bf Pre-regular semi-algebraic system}.
Let $[T, P]$ be a squarefree regular system  of $\Q[\u, \y]$.
Let $bp$ be the border polynomial of $[T, P]$.
Let $B\subset\Q[\u]$ be a polynomial set such that 
$bp$ divides the product of polynomials in $B$.
We call the triple $[B_{\neq}, T, \PP]$ a 
{\em pre-regular semi-algebraic system} of $\Q[\x]$.
Its zero set, written as $Z_{\R}(B_{\neq}, T, \PP)$, is 
the set $(u, y)\in\R^n$ such that $b(u)\neq0$, $t(u,y)=0$, $p(u, y)>0$,
for all $b\in B$, $t\in T$, $p\in P$.
Lemma~\ref{Lemma:decompose-prsas} and Lemma~\ref{Lemma:prsas}
are fundamental properties of pre-regular semi-algebraic systems.

\begin{Lemma}
%%%%%%%%%%%%%%
\label{Lemma:decompose-prsas}
%%%%%%%%%%%%%%
Let $\S$ be a semi-algebraic system of $\Q[\x]$.
Then there exists finitely many pre-regular semi-algebraic systems
$[{B_i}_{\neq}, T_i, {P_i}_>]$, $i = 1 \cdots e$, 
s.t. $Z_{\R}(\S)=\cup_{i=1}^e Z_{\R}({B_i}_{\neq}, T_i, {P_i}_>)$.
\end{Lemma}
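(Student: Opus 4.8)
The plan is to reduce the semi-algebraic system $\S=[F,\NN,\PP,\HH]$ to a union of basic semi-algebraic systems of the form $[F',\PP']$ and then apply a triangular decomposition over $\C$, tracking the border polynomials along the way. First I would eliminate the inequations and non-strict inequalities. Each inequation $h_j \neq 0$ can be kept as is, but it is convenient to absorb it: replace $h_j \neq 0$ by introducing nothing new yet, and instead handle $\NN$. For a non-negative inequality $n_k \geq 0$, split it into the disjunction $n_k > 0$ \,or\, $n_k = 0$; iterating over all $k$, the set $Z_{\R}(\S)$ becomes a finite union of zero sets of systems in which every former ``$\geq 0$'' atom has been replaced either by a strict inequality or by an equation. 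Thus, up to a finite union, we may assume $\NN = \emptyset$, i.e.\ the system has shape $[F,\PP,\HH]$ with equations, strict inequalities, and inequations only.

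Next I would deal with the inequations $\HH$. The standard Rabinowitsch-type trick is not ideal here because it introduces a new variable; instead, since $h \neq 0 \iff h^2 > 0$, one can rewrite each $h_j \neq 0$ as $h_j^2 > 0$ and fold it into $\PP$. After this step, $Z_{\R}(\S)$ is a finite union of zero sets of basic semi-algebraic systems $[F_i, P_{i,>}]$ with only equations and strict positivity conditions. The key remaining step is to triangularize each $[F_i, P_{i,>}]$. Apply {\sf Triangularize} to $F_i$ over $\C$; by definition this produces regular chains $T_{i,1},\ldots$ with $V(F_i) = \cup_j \overline{W(T_{i,j})}$ in Kalkbrener's sense. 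Passing to Lazard's sense (via a Lazard-style triangular decomposition, which the preliminaries permit us to invoke) we may instead write $V(F_i) = \cup_j W(T_{i,j})$, so that over $\R$ we have $Z_{\R}([F_i,P_{i,>}]) = \cup_j \{(u,y) : t(u,y)=0 \ \forall t\in T_{i,j},\ h_{T_{i,j}}(u,y)\neq 0,\ p(u,y)>0 \ \forall p\in P_i\}$. The nonvanishing of the initials $h_{T_{i,j}}$ is again an inequation, hence can be squared into the positivity list or kept, and then I would regularize each $p \in P_i$ with respect to $\sat{T_{i,j}}$ using {\sf Regularize}, splitting $T_{i,j}$ further so that on each piece every $p$ is either in $\sat{T}$ (in which case $p>0$ is unsatisfiable on $W(T)$ and the component is discarded) or regular modulo $\sat{T}$. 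After also making each regular chain squarefree and absorbing the new initial/discriminant inequations, we arrive at squarefree regular systems $[T, P]$ with $P$ regular modulo $\sat{T}$; taking $\x$ split into its free and algebraic variables relative to each $T$ gives a regular system of $\Q[\u,\y]$ as required.

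Finally I would attach the border polynomials. For each squarefree regular system $[T,P]$ obtained above, compute $bp = {\sf BorderPolynomial}(T,P)$ and let $B$ be its set of irreducible factors (or simply $B = \{bp\}$); then $[B_{\neq}, T, P_{>}]$ is a pre-regular semi-algebraic system by definition. The claim is that the union of the $Z_{\R}(B_{\neq}, T, P_{>})$ recovers $Z_{\R}(\S)$, but this needs care: imposing $bp \neq 0$ throws away the locus where $[T,P]$ fails to specialize well, i.e.\ part of $Z_{\R}(\S)$ over the hypersurface $bp = 0$. To repair this, I would recurse: the discarded locus is contained in $Z_{\R}(\S \wedge bp = 0)$, which is itself (the zero set of) a semi-algebraic system with strictly more equations; since each recursion step increases the number of equality constraints and the decompositions involved are Noetherian, the recursion terminates, and collecting all the pre-regular pieces produced along the way yields the finite list $[{B_i}_\neq, T_i, {P_i}_>]$, $i=1,\ldots,e$, with $Z_{\R}(\S) = \cup_{i=1}^e Z_{\R}({B_i}_\neq, T_i, {P_i}_>)$.

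The main obstacle is the termination and bookkeeping of this last recursion: one must argue that adding $bp=0$ genuinely reduces complexity (e.g.\ dimension drops, or the relevant constructible sets form a strictly descending chain) rather than looping, and one must verify that nothing in $Z_{\R}(\S)$ is lost when passing between Kalkbrener and Lazard decompositions and when squaring inequations into the positivity list. The purely formal manipulations — splitting $\geq$ into $>$ or $=$, replacing $\neq$ by ${}^2 > 0$, applying {\sf Regularize} and squarefree-ization — are routine given the cited results; it is the interaction of the border-polynomial cut with the recursion that carries the real content of the lemma.
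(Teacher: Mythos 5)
Your first half follows the same track as the paper's proof: split each $n\ge 0$ into $n>0$ or $n=0$, fold the inequations into the strict inequalities, triangularize over $\C$, regularize the inequality polynomials modulo $\sat{T}$, and attach the border polynomials. Where you diverge is the final step. The paper's proof of this lemma does not recurse: it invokes the specifications of {\sf Triangularize} and {\sf Regularize} together with Proposition~\ref{Proposition:borderpolynomial} to produce squarefree regular systems satisfying the \emph{exact} complex identity $V(F)\cap Z(P_{\neq})=\cup_{i}\left(V(T_i)\cap Z({B_i}_{\neq})\right)$, and the real statement follows by intersecting both sides with $\{P>0\}$. Your version accepts that a single pass may lose the locus lying over $bp=0$ and repairs it by recursing on $F\cup\{bp\}$; this is precisely the mechanism of the paper's Algorithms~\ref{Algo:GeneratePreRegularSas} and~\ref{Algo:RealTriangularize}, and it is a legitimate (arguably more explicit) route to the lemma.

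The one genuine gap is the termination of that recursion, which you flag but do not close. ``Each step adds an equality constraint'' proves nothing by itself, since one could adjoin new equations forever. The argument that works --- and is the one the paper uses to prove termination of Algorithm~\ref{Algo:RealTriangularize} --- is that each adjoined $bp$ is regular, in particular neither zero nor nilpotent, modulo $\sat{T}$; since $V(\sat{T})\subseteq V(F)$ gives $\langle F\rangle\subseteq\sqrt{\sat{T}}$, having $bp\in\langle F\rangle$ would force $bp$ to be nilpotent modulo $\sat{T}$. Hence $\langle F\rangle\subsetneq\langle F\cup\{bp\}\rangle$ at every recursive call, and an infinite recursion tree (finitely branching) would yield an infinite strictly ascending chain of ideals of $\Q[\x]$, which is impossible. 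Two smaller points: with several regular chains there are several border polynomials, and a residual point only lies in $V(F\cup\{bp_j\})$ for \emph{some} $j$, so you must recurse on each $F\cup\{bp_j\}$ separately rather than on a single ``$\S\wedge bp=0$''; and your replacement of $h\neq 0$ by $h^2>0$ is harmless because $\sat{T}$ is radical for a squarefree regular chain, so $h^2$ is regular exactly when $h$ is. With these supplied, your proof is complete.
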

\begin{proof}
The semi-algebraic system $\S$ decomposes
into basic semi-algebraic systems,
by rewriting  inequality of type $n\ge0$ as: $n>0 \, \vee \, n=0$.
Let $[F, \PP]$ be one of those basic semi-algebraic systems.
If $F$ is empty, then the triple $[P, \varnothing, \PP]$, 
%where $\varnothing$ is the empty regular chain, 
is a pre-regular semi-algebraic system.
If $F$ is not empty, by Proposition~\ref{Proposition:borderpolynomial}
and the specifications of {\sf Triangularize} and {\sf Regularize}, 
one can compute finitely many
squarefree regular systems $[T_i, H]$ 
such that $
V(F)\cap Z(P_{\neq})=\cup_{i=1}^e \left(V(T_i)\cap Z({B_i}_{\neq})\right)
$ holds and 
where $B_i$ is the border polynomial set of the regular system $[T_i, H]$.
Hence, we have
$
Z_{\R}(F, \PP)=\cup_{i=1}^e Z_{\R}({B_i}_{\neq}, T_i, \PP),
$
where each 
$[{B_i}_{\neq}, T_i, \PP]$ is a pre-regular semi-algebraic system. 
\end{proof}

\begin{Lemma}
%%%%%%%%%%%%%
\label{Lemma:prsas}
%%%%%%%%%%%%%
Let $[B_{\neq}, T, \PP]$ be a pre-regular semi-algebraic system 
of $\Q[\u,\y]$.
Let $h$ be the product of polynomials in $B$.
The complement of the hypersurface $h=0$ in $\R^d$ consists of 
finitely many open cells 
%%homeomorphic to $\R^d$.
of dimension $d$.
Let $C$ be one of them.
Then for all $\alpha\in C$, 
the number of real zeros of $[T(\alpha), \PP(\alpha)]$ is the same.
\end{Lemma}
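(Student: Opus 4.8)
The plan is to exhibit a single polynomial whose non-vanishing is equivalent, on each cell $C$, to the constancy of the real-zero count, and then invoke a classical semi-algebraic connectedness argument. First I would recall that, by the definition of a pre-regular semi-algebraic system, $bp$ divides $h = \prod_{b\in B} b$, so by Proposition~\ref{Proposition:borderpolynomial} the regular system $[T, P]$ specializes well at every point $\alpha$ of the complement of $\{h=0\}$ in $\R^d$; in particular each specialized system $[T(\alpha), P(\alpha)]$ is a squarefree regular system in $\Q(\alpha)[\y]$, so $T(\alpha)$ has exactly $\prod_{t\in T}\mdeg{t}$ complex zeros counted without multiplicity, and this number is already independent of $\alpha\in C$. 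The content of the lemma is that the number of those zeros that are \emph{real} and additionally satisfy all the strict inequalities in $\PP$ is locally constant on the complement of $\{h=0\}$; since that complement is open in $\R^d$ and semi-algebraic, it is a finite disjoint union of connected open cells, so it suffices to prove the count is constant along any path inside a single cell $C$.

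The key step is a continuity-of-roots argument combined with the fact that the boundary where roots can be created, destroyed, collide, or cross the hypersurfaces $p=0$ has been swept into $\{h=0\}$. Concretely, for $\alpha\in C$ the zero set of $[T(\alpha)]$ in $\C^{n-d}$ consists of finitely many points depending continuously (indeed semi-algebraically) on $\alpha$ as $\alpha$ varies in $C$: the initials of the $t\in T$ do not vanish on $C$ (good specialization, condition $(i)$), and the discriminant-type factors $\res(\der{t},T)$ divide $bp$ hence $h$, so no two of these complex zeros merge and none escapes to infinity as long as $\alpha$ stays in $C$. Thus the complex zeros of $T$ over $C$ organize into a fixed number of continuous branches $\zeta_1(\alpha),\ldots,\zeta_N(\alpha)$, with $N = \prod_{t\in T}\mdeg{t}$. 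A branch $\zeta_j(\alpha)$ is real for $\alpha$ in some (open and closed in $C$) subset, because it can only change from real to non-real by colliding with another branch (its conjugate), which is again ruled out on $C$; hence each branch is real throughout $C$ or non-real throughout $C$. Finally, for a branch that is real on all of $C$, the sign of each $p(\alpha,\zeta_j(\alpha))$ for $p\in P$ is constant on $C$: this sign can change only where $p(\alpha,\zeta_j(\alpha)) = 0$, i.e.\ where $\res(p,T)$ vanishes, and $\res(p,T)$ divides $bp$ hence $h$. Therefore the set of indices $j$ for which $\zeta_j(\alpha)$ is real and $p(\alpha,\zeta_j(\alpha))>0$ for all $p\in P$ is independent of $\alpha\in C$, and its cardinality is the number of real zeros of $[T(\alpha),\PP(\alpha)]$.

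The main obstacle is making the ``continuous branches'' picture rigorous without a path-lifting technicality: one must argue that over a connected semi-algebraic set on which the leading coefficients and the discriminant do not vanish, the complex roots of a triangular system admit a continuous (semi-algebraic) parametrization. I would handle this either by the classical theorem on continuity of roots of a polynomial whose degree is locally constant and whose discriminant is nonzero, applied variable by variable through the triangular structure of $T$ (the zeros of $t_1$ in the last variable first, then substituting into $t_2$, and so on — the good-specialization hypothesis guarantees that at each stage the relevant initial is nonzero and the relevant discriminant factor, being part of $bp$, is nonzero on $C$), or, more cleanly, by citing a standard semi-algebraic triviality / local-constancy result for the number of real solutions of a parametric squarefree regular system over a connected set avoiding the border polynomial, which is exactly the property for which $bp$ was designed in~\cite{yhx01}. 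Either route reduces the statement to invariants ($\mdeg{t}$, signs of branch values) that visibly do not depend on $\alpha$ within $C$.
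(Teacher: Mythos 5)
Your proposal is correct and follows essentially the same route as the paper: the paper's proof simply invokes Proposition~\ref{Proposition:borderpolynomial} (good specialization off $h=0$) together with recursive use of Collins' delineability theorem, and your branch-continuity argument --- roots neither collide nor escape to infinity, and the signs of the $p\in P$ along real branches cannot change because the relevant iterated resultants divide $bp$ and hence $h$ --- is precisely a spelled-out version of that recursive delineability step through the triangular structure of $T$.
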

\begin{proof}
%It follows from 
From 
Proposition~\ref{Proposition:borderpolynomial}
and recursive use of Theorem 1 in~\cite{col75}
on the delineability of a polynomial.
\end{proof}

\begin{Lemma}
%%%%%%%%%%%%
\label{Lemma:pre2regular}
%%%%%%%%%%%%
Let $[B_{\neq}, T, \PP]$ be a pre-regular semi-algebraic system 
 of $\Q[\u, \y]$.
%one can determine whether its zero set in $\R^n$
%is empty or not. 
One can decide whether its zero set is empty or not.
If it is not empty, then one can compute 
a regular semi-algebraic system $[\QQ, T, \PP]$
whose zero set in $\R^n$ is the same as that of $[B_{\neq}, T, \PP]$.
%%compute regular semi-algebraic
%%systems $A_1, \ldots, A_e$ of $\Q[\u, \y]$
%%such that we have 
%%$Z_{\R}(B_{\neq}, T, \PP)=\cup_{i=1}^e Z_{\R}(A_i).$
\end{Lemma}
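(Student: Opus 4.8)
The goal is to convert the "open cells where the root count is constant" picture from Lemma~\ref{Lemma:prsas} into a single quantifier-free formula $\QQ$ over $\u$ that carves out exactly those cells on which $[T(\alpha),\PP(\alpha)]$ has at least one real zero. First I would work over $\R^d\setminus V(h)$, which by Lemma~\ref{Lemma:prsas} splits into finitely many open cells of full dimension $d$, with the real root count of $[T(\alpha),\PP(\alpha)]$ constant on each. The essential step is then to \emph{name} each cell by a quantifier-free formula in $\u$: since $V(h)$ is a real hypersurface, its complement is a union of connected components, and one can (e.g.\ via a CAD of $\{h=0\}$, or via real root isolation of $h$ viewed as a polynomial in the last free variable with the others as parameters) produce for each cell $C$ a quantifier-free formula $\phi_C(\u)$ with $Z_\R(\phi_C)=C$. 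This is standard: realizable sign conditions on a finite polynomial set are quantifier-definable. The decidability claim follows at the same time, because on each cell it suffices to pick one sample point $\alpha\in C$ (sample points of the complement of a hypersurface are computable) and evaluate the number of real solutions of the zero-dimensional-in-$\y$ system $[T(\alpha),\PP(\alpha)]$ — a finite computation by real root counting.

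**Assembling $\QQ$.** Let $\mathcal{C}_{+}$ be the (finite, and now computable) collection of cells $C$ on which $[T(\alpha),\PP(\alpha)]$ has at least one real zero, and set $\QQ := \bigvee_{C\in\mathcal{C}_{+}} \phi_C(\u)$ (the empty disjunction being $\false$). If $\mathcal{C}_+=\emptyset$ the zero set of $[B_{\neq},T,\PP]$ is empty and we report so; otherwise $\QQ$ defines the open semi-algebraic set $S=\bigcup_{C\in\mathcal{C}_+}C\subseteq\R^d$, which is non-empty and open, giving condition~$(i)$ of the definition of a regular semi-algebraic system. For condition~$(ii)$: every $\alpha\in S$ lies off $V(h)$, hence off $V(bp)$ since $bp\mid h$, so by Proposition~\ref{Proposition:borderpolynomial} the regular system $[T,P]$ specializes well at $\alpha$. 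For condition~$(iii)$: by construction each cell in $\mathcal{C}_+$ carries at least one real zero of the specialized system, and membership in $\mathcal{C}_+$ is exactly what $\QQ$ encodes. Hence $[\QQ,T,\PP]$ is a regular semi-algebraic system.

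**Equality of zero sets.** It remains to check $Z_\R(\QQ,T,\PP)=Z_\R(B_{\neq},T,\PP)$. A point $(u,y)$ lies in the latter iff $h(u)\neq0$, $t(u,y)=0$ for all $t\in T$, and $p(u,y)>0$ for all $p\in P$; in particular $u$ lies in some full-dimensional cell $C$ of $\R^d\setminus V(h)$, and the existence of such a $y$ forces $C\in\mathcal{C}_+$, i.e.\ $\QQ(u)$ holds, so $(u,y)\in Z_\R(\QQ,T,\PP)$. Conversely, if $\QQ(u)$ holds then $u\in C$ for some $C\in\mathcal{C}_+$, so $h(u)\neq 0$ (as $C\subseteq\R^d\setminus V(h)$), and the remaining conditions $t(u,y)=0$, $p(u,y)>0$ are identical in the two systems; thus the two zero sets coincide.

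**Main obstacle.** The only real work is the step that produces, from the real hypersurface $V(h)$, the finite set of full-dimensional complementary cells \emph{together with} a defining quantifier-free formula $\phi_C$ and a sample point for each — i.e.\ making the "finitely many open cells" of Lemma~\ref{Lemma:prsas} effective and nameable. I would handle this by a (projection-based) CAD adapted to $\u$-space of the single polynomial $h$, so that cells are recursively described by sign conditions on the projection polynomials and sample points are available at each level; alternatively one can invoke the forthcoming fingerprint-polynomial-set machinery of Section~\ref{sec:FPS}, which is precisely designed to carry out this special case of quantifier elimination more efficiently than a general CAD. Either way, no step beyond this requires more than routine bookkeeping.
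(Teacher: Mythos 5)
Your proof is correct and follows essentially the same route as the paper's: both rest on Lemma~\ref{Lemma:prsas} to get constancy of the real root count on the connected components of $B(\u)\neq 0$, take $\QQ$ to define the union of the components carrying at least one real solution, and then check the three defining conditions of a regular semi-algebraic system together with the equality of zero sets. You merely make explicit (via CAD cells, sample points, and real root counting) the quantifier-elimination step that the paper's proof invokes as a black box and defers to Section~\ref{sec:FPS}, and you treat the $T=\varnothing$ case uniformly where the paper handles it separately.
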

\begin{proof}
If $T=\varnothing$,
%then $P$ is a subset of $B$.
we can always test whether 
the zero set of $[B_{\neq},\PP]$ is empty or not,
for instance using {\small CAD}.
%\ifdraft{\ednote{JHD: but this might be doubly exponential, so we had better not rely on this in practice.}
%}
%\fi
If it is empty, we are done.
Otherwise, defining $\QQ=B_{\neq}\wedge\PP$, 
the triple $[\QQ, T, \PP]$ is a regular semi-algebraic system.
If $T$ is not empty, 
%let $\u$ and $\y$ be respectively the free and algebraic
%variables of $T$ and set the order as $\u<\y$.
we solve the quantifier elimination problem 
$\exists\y(B(\u)\neq 0, T(\u, \y)=0, P(\u, \y)>0)$ and let
$\QQ$ be the resulting formula.
If $\QQ$ is false, we are done.
Otherwise, by Lemma~\ref{Lemma:prsas},
above each connected component of $B(\u)\neq 0$, 
the number of real zeros of the system $[B_{\neq}, T, \PP]$
is constant.
Then, the zero set defined by $\QQ$ is the union
of the connected  components of 
$B(\u)\neq 0$ above which $[B_{\neq}, T, \PP]$
possesses at least one solution.
Thus, $\QQ$
defines a nonempty open set of $\R^d$
and 
$[\QQ, T, \PP]$ is a regular semi-algebraic system.
\end{proof}

\begin{Theorem}
%%%%%%%%%%%%%%%%
\label{Theorem:decompose}
%%%%%%%%%%%%%%%%
Let $\S$ be a semi-algebraic system of $\Q[\x]$.
Then one can compute a (full) triangular decomposition
of $\S$, that is, as defined  in the introduction, 
finitely many regular semi-algebraic 
systems such that the union of their zero sets is the 
zero set of $\S$.
\end{Theorem}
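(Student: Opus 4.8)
The plan is to chain together the three lemmas on pre-regular semi-algebraic systems proved just above. First I would invoke Lemma~\ref{Lemma:decompose-prsas} to write the zero set of $\S$ as a finite union
$$
Z_{\R}(\S)=\bigcup_{i=1}^{e} Z_{\R}({B_i}_{\neq}, T_i, {P_i}_>)
$$
of zero sets of pre-regular semi-algebraic systems $[{B_i}_{\neq}, T_i, {P_i}_>]$. This step is constructive: it only uses the rewriting of each non-strict inequality $n\geq 0$ as $n>0\vee n=0$, together with {\sf Triangularize}, {\sf Regularize}, and {\sf BorderPolynomialSet}, whose correctness rests on Proposition~\ref{Proposition:borderpolynomial}.

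Next, for each index $i$ I would apply Lemma~\ref{Lemma:pre2regular} to $[{B_i}_{\neq}, T_i, {P_i}_>]$. That lemma first decides whether $Z_{\R}({B_i}_{\neq}, T_i, {P_i}_>)$ is empty; if so, we drop $i$ from the union, which leaves $Z_{\R}(\S)$ unchanged. If it is non-empty, the lemma returns a regular semi-algebraic system $[\QQ_i, T_i, {P_i}_>]$ with the same zero set in $\R^n$ (this is where the quantifier elimination, guided by Lemma~\ref{Lemma:prsas} so that $\QQ_i$ involves only the free variables $\u$, is used). Writing $\mathcal{I}\subseteq\{1,\dots,e\}$ for the set of surviving indices, we get
$$
Z_{\R}(\S)=\bigcup_{i\in\mathcal{I}} Z_{\R}(\QQ_i, T_i, {P_i}_>),
$$
so $\{\,[\QQ_i, T_i, {P_i}_>]\mid i\in\mathcal{I}\,\}$ is a full triangular decomposition of $\S$ in the sense of the introduction, the empty family covering the case $Z_{\R}(\S)=\varnothing$. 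Every operation involved is effective, which gives the ``one can compute'' part.

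Since the genuine content has already been isolated in Lemmas~\ref{Lemma:decompose-prsas}--\ref{Lemma:pre2regular}, no new geometric or algebraic ingredient is needed, and I expect no real obstacle. The only point requiring care is the bookkeeping around empty components: one must check that deleting a pre-regular system with empty real zero set does not affect the union and still leaves a finite list, so that the output matches the definition of a full triangular decomposition.
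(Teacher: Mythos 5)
Your proof is correct and follows exactly the paper's own argument: the paper proves Theorem~\ref{Theorem:decompose} by citing Lemma~\ref{Lemma:decompose-prsas} and Lemma~\ref{Lemma:pre2regular}, which is precisely the chain you spell out. Your additional bookkeeping about discarding pre-regular systems with empty real zero set is a harmless elaboration of the same reasoning.
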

\begin{proof}
It follows from Lemma~\ref{Lemma:decompose-prsas}
and~\ref{Lemma:pre2regular}.
\end{proof}
%%%%%%%% end of specification %%%%%%%%%%%%%%

%%
%%
%%%%%%%% complexity %%%%%%%%%%%%%%
\section{Complexity Results}
\label{sec:BPRTD}

We start this section by stating complexity 
estimates for basic operations on multivariate polynomials.

\smallskip\noindent{\small \bf Complexity of basic polynomial operations.} 
Let $p, q \in {\Q}[\x]$ be polynomials with
respective total degrees ${\delta}_p, {\delta}_q$,
and let $x\in \x$.
Let $\hbar_p, \hbar_q, \hbar_{pq}$ and $ \hbar_r$
be the {\em height}
(that is, the bit size of the maximum absolute value of 
the numerator or denominator of a coefficient) of $p,q$,
the product $pq$ and the resultant $\res(p,q,x)$,  respectively.
In~\cite{DST88}, it is proved that 
$\gcd(p,q)$ can be computed within $O(n^{2\delta+1}\hbar^3)$ bit operations
where ${\delta} = {\max}({\delta}_p, {\delta}_q)$
and $\hbar = {\max}(\hbar_p, \hbar_q)$.
It is easy to establish that 
$\hbar_{pq}$ and $\hbar_r$ are respectively
upper bounded by 
$\hbar_p +\hbar_q + n\log(\min(\delta_p,\delta_q)+1)$
and 
$\delta_q\hbar_p +\delta_p\hbar_q+n\delta_q\log(\delta_p+1)+n\delta_p\log(\delta_q+1)+\log\left((\delta_p+\delta_q)!\right)$.
Finally, let 
$M$ be a $k \times k$ matrix over ${\Q}[\x]$.
Let ${\delta}$ (resp. $\hbar$)
be the maximum total degree (resp. height)
of a polynomial coefficient of $M$.
Then $\det(M)$ can be computed within 
$O(k^{2n+5}(\delta+1)^{2n}\hbar^2)$ 
bit operations, see~\cite{HS98}.

We turn now to the main subject of this section,
that is, complexity estimates for 
a lazy triangular decomposition of a polynomial system
under some genericity assumptions.
Let $F \subset {\Q}[\x]$. 
A lazy triangular decomposition (as defined in the Introduction) of 
the semi-algebraic system $\S=[F,\emptyset,\emptyset,\emptyset]$, 
which only involves equations,
is obtained by the above
algorithm.

\begin{algorithm}
\dontprintsemicolon
\linesnumbered
%* makes the floating environment ignore the two column format
\caption{\LazyRealTriangularize{\S}
%\hfil\break(special case of Algorithm \ref{Algo:LazyRealTriangularize})
\label{Algo::3-step}}
\KwIn{a semi-algebraic system $\S=[F, \emptyset, \emptyset, \emptyset]$}
\KwOut{a lazy triangular decomposition of $\S$}
$\T:=\Triangularize{F}$\; % (in Kalkbrener sense)
\For{$T_i \in\T$}{
$bp_i:=\BP{T_i,\emptyset}$ \; 
solve $\exists\y(bp_i(\u)\neq 0, T_i(\u, \y)=0)$, 
and let $\QQ_i$ be the resulting quantifier-free formula\;
\lIf{$\QQ_i \neq false$}{
output $[\QQ_i, T_i, \emptyset]$
}
}
\end{algorithm}

\smallskip\noindent{\small \bf Proof of Algorithm~\ref{Algo::3-step}}.
The termination of the algorithm is obvious.
Let us prove its correctness.
Let $R_i=[\QQ_i, T_i, \emptyset]$, for $i=1\cdots t$ be 
the output of Algorithm~\ref{Algo::3-step} 
and let $T_j$ for $j=t+1\cdots s$ be the
regular chains such that $\QQ_j=false$. 
By Lemma~\ref{Lemma:pre2regular}, each $R_i$ is a regular semi-algebraic system.
For $i=1\cdots s$, define $F_i=\sat{T_i}$.
Then we have
$V(F) = \cup_{i=1}^s V(F_i)$, 
where each $F_i$ is equidimensional. 
For each $i = 1\cdots s$, by Proposition~\ref{Proposition:borderpolynomial}, 
we have $$V(F_i)\setminus V(bp_i)=V(T_i) \setminus V(bp_i).$$
Moreover, we have 
$$V(F_i)=\left(V(F_i)\setminus V(bp_i)\right) \cup V(F_i\cup\{bp_i\}).$$
Hence,
$$Z_{\R}(R_i)=Z_{\R}(T_i)\setminus Z_{\R}(bp_i)\subseteq Z_{\R}(F_i)\subseteq Z_{\R}(F)$$ holds.
In addition, since $bp_i$ is regular modulo $F_i$, we have
$$
\begin{array}{rcl}
Z_{\bb{R}}(F) \setminus \cup_{i=1}^tZ_{\R}(R_i)
&=& \cup_{i=1}^s Z_{\R}(F_i)\setminus \cup_{i=1}^t Z_{\R}(R_i)\\
&\subseteq& \cup_{i=1}^s Z_{\R}(F_i) \setminus \left(Z_{\R}(T_i)\setminus Z_{\R}(bp_i)\right)\\
&\subseteq& \cup_{i=1}^s Z_{\R}(F_i \cup  \{bp_i\}),
\end{array}
$$ and $\dim\left(\cup_{i=1}^s V(F_i \cup  \{bp_i\})\right) <  \dim(V(F))$. 
So the $R_i$, for $i=1\cdots t$, 
form a lazy triangular decomposition of $\S$. 
% JHD: or these two lines
% So all the $R_i$ form a lazy triangular decomposition of $F$, with
% the $F_i \cup  \{bp_i\}$ being the parts deferred.
$\square$

In this section, under some genericity assumptions for $F$,
we establish running time estimates for % this procedure,
Algorithm~\ref{Algo::3-step},
see Proposition~\ref{prop:lazy-rtd}.
This is achieved through:
\begin{itemizeshort}
\item[ $(1)$  ] Proposition~\ref{Prop:foobar}  
giving running time and output size estimates
for a Kalkbrener triangular decomposition of an algebraic set, and
\item[ $(2)$  ] Theorem~\ref{thm: complexity-bp}
 giving running time and output size estimates
for a border polynomial computation.
\end{itemizeshort}
Our assumptions for these results are the following:
\begin{itemizeshort}
\item[ $\mathbf{(H_0)}$] $V(F)$ is equidimensional of dimension $d$,
\item[ $\mathbf{(H_1)}$] $x_1, \ldots, x_d$ are algebraically independent 
   modulo each associated prime ideal of 
   the ideal generated by $F$ in ${\Q}[\x]$,
\item[ $\mathbf{(H_2)}$] $F$ consists of $m := n - d$ polynomials,
      $f_1, \ldots, f_m$.
\end{itemizeshort}
Hypotheses $\mathbf{(H_0)}$ and $\mathbf{(H_1)}$
are equivalent to the existence of  
regular chains $T_1, \ldots, T_e$ of ${\Q}[x_1, \ldots, x_n]$
such that $x_1, \ldots, x_d$ are free
w.r.t.\ each of $T_1, \ldots, T_e$ and such that we have
$V(F) = \overline{W(T_1)} \, \cup \, \cdots \, \cup \, \overline{W(T_e)}.$

Denote by ${\delta}$, $\hbar$ respectively 
the maximum total degree and height 
of $f_1, \ldots, f_m$.
In her PhD Thesis~\cite{Szanto99}, {\'A}.~Sz{\'a}nt{\'o}
describes an algorithm
which computes a Kalkbrener 
triangular decomposition, $T_1, \ldots, T_e$,  of $V(F)$.
Under Hypotheses $\mathbf{(H_0)}$
to $\mathbf{(H_2)}$, this algorithm runs in time
$m^{O(1)} ({\delta}^{O(n^2)})^{d + 1}$ counting operations in {\Q},
while the total degrees of the 
polynomials in the output are bounded by $n{\delta}^{O(m^2)}$.
In addition, $T_1, \ldots, T_e$ are square free, 
{\em strongly normalized}~\cite{MMM99} and {\em reduced}~\cite{ALM99}.

From $T_1, \ldots, T_e$, we obtain 
regular chains $E_1, \ldots, E_e$ forming another
Kalkbrener triangular decomposition of $V(F)$, as follows.
Let $i= 1 \cdots e$ and $j = (d+1) \cdots n$.
Let $t_{i,j}$ be the polynomial of $T_i$ with $x_j$ as main variable.
Let ${e_{i,j}}$ be the primitive part of $t_{i,j}$ 
regarded as a polynomial in ${\Q}[x_1,\ldots,x_d][x_{d+1}, \ldots, x_n]$.
Define $E_i = \{ e_{i,d+1}, \ldots, e_{i,n} \}$.
According to the complexity results for polynomial
operations stated at the beginning of this section, 
%%one observes that 
this transformation can be done 
within ${\delta}^{O(m^4) O(n)}$ operations in {\Q}.

Dividing ${e_{i,j}}$ by its initial
we obtain a monic polynomial $d_{i,j}$ 
of ${\Q}(x_1, \ldots, x_d)[x_{d+1}, \ldots, x_n]$.
Denote by $D_i$ the regular chain $\{ d_{i,d+1}, \ldots, d_{i,n} \}$.
Observe that $D_i$ is the 
reduced lexicographic Gr\"obner basis of the radical ideal 
it generates in 
${\Q}(x_1, \ldots,\allowbreak x_d)[x_{d+1}, \ldots, x_n]$.
So Theorem 1 in~\cite{DaSKac09} applies to each regular chain $D_i$.
For each polynomial $d_{i,j}$, this theorem provides 
height and total degree estimates
expressed as functions of the {\em degree}~\cite{BCSL97}  and
the {\em height}~\cite{pph3,KrPaSo01} of the algebraic set
$\overline{W(D_i)}$.
Note that the degree and height of
$\overline{W(D_i)}$ are upper bounded by those of $V(F)$.
Write $d_{i,j} = {\Sigma}_{\mu} \, \frac{{\alpha}_{\mu}}{{\beta}_{\mu}} \, {\mu}$
where each ${\mu} \in {\Q}[x_{d+1}, \ldots, x_n]$ is a monomial
and ${\alpha}_{\mu}, {\beta}_{\mu}$ are in 
${\Q}[x_1, \ldots,\allowbreak x_d]$ such that 
${\gcd}({\alpha}_{\mu}, {\beta}_{\mu}) = 1$ holds.
Let ${\gamma}$ be the lcm of the ${\beta}_{\mu}$'s.
Then for ${\gamma}$ and each ${\alpha}_{\mu}$:
\begin{itemizeshort}
	\item the total degree is bounded by $2 {\delta}^{2m}$ and,
	\item  the height 
           by $O({\delta}^{2m}(m \hbar +  d m \, {\log}({\delta}) 
                    + n {\log}(n)))$.
\end{itemizeshort}
Multiplying $d_{i,j}$ by ${\gamma}$
brings $e_{i,j}$ back. We deduce
the height and total degree 
estimates for each $e_{i,j}$ below.

\begin{Proposition}
\label{Prop:foobar}
The Kalkbrener triangular decomposition $E_1, \ldots, E_e$ 
of $V(F)$ can be computed in
${\delta}^{O(m^4) O(n)}$ operations in {\Q}.
In addition,  every polynomial $e_{i,j}$
has total degree upper bounded by $4 {\delta}^{2m} + {\delta}^m$,
and has height upper bounded by
$O({\delta}^{2m}(m \hbar +  d m  {\log}({\delta}) 
                    + n {\log}(n)))$.
\end{Proposition}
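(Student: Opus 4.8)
The plan is to assemble the bound from the pieces already laid out in the paragraphs preceding the statement, so the proof is essentially a bookkeeping argument tracking two quantities through the chain of transformations $T_i \rightsquigarrow E_i \rightsquigarrow D_i$ and back. First I would recall that Sz\'ant\'o's algorithm produces the $T_i$ in time $m^{O(1)}({\delta}^{O(n^2)})^{d+1}$, which already absorbs into ${\delta}^{O(m^4)O(n)}$ once one notes $d+1 \le n$; so the only running-time contribution left to control is the cost of the two primitive-part/clearing-denominator passes, and the discussion above the statement already charges each of these at ${\delta}^{O(m^4)O(n)}$ operations in $\Q$ by invoking the $\gcd$ and polynomial-arithmetic estimates from the start of the section applied to polynomials of the stated sizes. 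Taking the maximum of these contributions gives the claimed running time.

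For the size estimates I would proceed in the forward direction exactly as the text sets up. Passing from $T_i$ to $E_i$ (taking primitive parts over $\Q[x_1,\dots,x_d][x_{d+1},\dots,x_n]$) does not increase total degree. Dividing $e_{i,j}$ by its initial gives the monic $d_{i,j}$, whose coefficients lie in $\Q(x_1,\dots,x_d)$; since $D_i$ is the reduced lexicographic Gr\"obner basis of the radical ideal it generates over $\Q(x_1,\dots,x_d)[x_{d+1},\dots,x_n]$, Theorem~1 of~\cite{DaSKac09} applies, and combined with the fact that the degree and height of $\overline{W(D_i)}$ are bounded by those of $V(F)$ (themselves controlled by $\delta$ via B\'ezout-type bounds), this yields the stated bounds: each numerator ${\alpha}_{\mu}$ and the common denominator lcm ${\gamma}$ have total degree at most $2{\delta}^{2m}$ and height $O({\delta}^{2m}(m\hbar + dm\log{\delta} + n\log n))$. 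Multiplying $d_{i,j}$ by ${\gamma}$ recovers $e_{i,j}$: the total degree of the product is bounded by the degree of $d_{i,j}$ in the $x_{d+1},\dots,x_n$ block — which is at most $\delta^m$ — plus the degree $2{\delta}^{2m}$ contributed by each coefficient, and doing this carefully (the leading-term coefficient is $1$, the others are ${\alpha}_\mu/{\gamma}$ times a power of $\gamma$) gives $4{\delta}^{2m} + {\delta}^m$; the height is additive under multiplication up to lower-order terms, so it stays $O({\delta}^{2m}(m\hbar + dm\log{\delta} + n\log n))$. Finally one checks that $E_1,\dots,E_e$ is indeed still a Kalkbrener triangular decomposition of $V(F)$: taking primitive parts (over a domain) and clearing/restoring denominators that are units in $\Q(x_1,\dots,x_d)$ does not change the quasi-components' Zariski closures above the generic fiber, hence $V(F) = \overline{W(E_1)} \cup \cdots \cup \overline{W(E_e)}$, and each $E_i$ remains a regular chain with $x_1,\dots,x_d$ free.

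The main obstacle I expect is the careful propagation of the \emph{height} bound through the two denominator passes: multiplying by ${\gamma}$ (an lcm of many ${\beta}_\mu$'s) and dividing by it must be done in a way that keeps the height linear in $\hbar$ rather than picking up a multiplicative blow-up, and this requires using that $\gcd({\alpha}_\mu,{\beta}_\mu)=1$ together with the fact that ${\gamma}$ divides a bounded-height polynomial, so that the individual ${\alpha}_\mu \cdot ({\gamma}/{\beta}_\mu)$ have controlled height. The degree arithmetic ($4{\delta}^{2m}+{\delta}^m$) is then a short calculation, and the correctness claim for the decomposition is essentially immediate from the definitions of primitive part and saturated ideal. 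I would therefore spend the bulk of the written proof on the height tracking and merely cite~\cite{DaSKac09} and the polynomial-arithmetic estimates of~\cite{DST88,HS98} for the rest.
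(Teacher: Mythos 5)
Your proposal follows the paper's own route exactly: the Proposition is stated as the conclusion of the preceding discussion (Sz\'ant\'o's algorithm, passage to primitive parts, the bounds of Theorem~1 of~\cite{DaSKac09} for the monic $d_{i,j}$, and clearing the denominator $\gamma$), and your degree/height bookkeeping --- in particular tracing $4\delta^{2m}+\delta^{m}$ to $\deg \alpha_{\mu}+\deg(\gamma/\beta_{\mu})+\deg\mu$ with $\deg\mu$ controlled by the B\'ezout bound $\delta^{m}$ --- matches it. The one loose point is your claim that Sz\'ant\'o's $(\delta^{O(n^2)})^{d+1}$ is absorbed into $\delta^{O(m^4)O(n)}$ merely because $d+1\le n$ (strictly this needs $n=O(m^2)$), but the paper makes the same leap and explicitly disclaims sharpness of these exponents.
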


Next we estimate the running time
and output size for computing the border polynomial
of a regular system.
\begin{Theorem}
\label{thm: complexity-bp}
Let $R=[T,P]$ be a squarefree regular system of $\Q[\u,\y]$,
with $m = \# T$ and ${\ell} = \# P$.
Let $bp$ be the border polynomial of $R$.
Denote by ${\delta}_R$, ${\hbar}_R$ 
respectively the maximum total degree and height
of a polynomial in $R$.
Then the total degree of $bp$ is upper bounded by 
$(\ell+m)2^{m-1}{{\delta}_R}^{m}$, 
and $bp$ can be computed within
$(n\ell+nm)^{O(n)} (2{{\delta}_R})^{O(n)O(m)}{{\hbar}_R}^3$ 
bit operations.
\end{Theorem}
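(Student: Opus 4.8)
The plan is to bound, in turn, the degree and height of each polynomial computed along the way to $bp$, then count the bit operations for each step, and finally combine these bounds. Recall that $bp$ is the primitive squarefree part of the product of the iterated resultants $\res(\der{t},T)$ over $t\in T$ and $\res(h,T)$ over $h\in P$. So the first step is to control a single iterated resultant $\res(p,T)$ where $p$ is either some $\der{t}$ (of total degree at most ${\delta}_R$, since differentiation does not raise total degree) or some $h\in P$ (of total degree at most ${\delta}_R$). Iterating the classical bound that $\deg_{\rm tot}\res(f,g,x_j)\le \deg_{\rm tot}(f)\deg_{x_j}(g)+\deg_{\rm tot}(g)\deg_{x_j}(f)$ over the $m$ variables eliminated by $T$, and using $\deg_{x_j}(t_j)\le {\delta}_R$, gives a total degree bound of the shape $2^{m-1}{{\delta}_R}^{m}$ for each $\res(p,T)$; multiplying the $\ell+m$ such factors yields the claimed $(\ell+m)2^{m-1}{{\delta}_R}^{m}$ bound on $\deg_{\rm tot}(bp)$ (taking primitive squarefree part only decreases the degree).

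Next I would track heights through the same recursion, using the height estimate for resultants quoted at the start of Section~\ref{sec:BPRTD}: $\hbar_r\le \delta_q\hbar_p+\delta_p\hbar_q+n\delta_q\log(\delta_p+1)+n\delta_p\log(\delta_q+1)+\log((\delta_p+\delta_q)!)$. Feeding in the degree bounds just obtained (each intermediate polynomial has total degree at most $2^{m-1}{{\delta}_R}^m$ in at most $n$ variables), one gets, after $m$ iterations, a height bound for $\res(p,T)$ that is polynomial in $(2{{\delta}_R})^{O(m)}$ and linear in ${\hbar}_R$, hence of the form $(2{{\delta}_R})^{O(m)O(n)}{\hbar}_R$ up to lower-order terms; the product over $\ell+m$ factors and the primitive-part/squarefree-part extraction keep the height within $(2{{\delta}_R})^{O(n)O(m)}{\hbar}_R$ (the $\gcd$ computation for the primitive part, and the $\gcd$ with the derivative for the squarefree part, do not increase the height beyond a polynomial factor, by the $\gcd$ cost estimate from~\cite{DST88}).

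For the bit-operation count I would organize the computation as: (a) each single resultant $\res(p,T_v,v)$ is the determinant of a Sylvester-type matrix of size $O({{\delta}_R})$ over ${\Q}[\x]$ whose entries have total degree $O(2^{m}{{\delta}_R}^{m})$ in $n$ variables and height $(2{{\delta}_R})^{O(n)O(m)}{\hbar}_R$; by the determinant estimate $O(k^{2n+5}(\delta+1)^{2n}\hbar^2)$ from~\cite{HS98} this costs $(n{{\delta}_R})^{O(n)}(2{{\delta}_R})^{O(n)O(m)}{{\hbar}_R}^2$ bit operations; (b) there are $O(m)$ such resultant steps per factor and $\ell+m$ factors, contributing a factor $\ell+m$; (c) the final products and the primitive/squarefree reductions add another $\ell+m$ multiplications and $O(\ell+m)$ $\gcd$'s, each within the same asymptotic budget by~\cite{DST88}. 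Collecting everything and absorbing polynomial overheads into the exponents gives $(n\ell+nm)^{O(n)}(2{{\delta}_R})^{O(n)O(m)}{{\hbar}_R}^3$, as claimed.

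The main obstacle I expect is bookkeeping the height growth through the iterated resultant cleanly: naively each resultant step can multiply the height by a factor proportional to the current degree, so over $m$ eliminations and then over $\ell+m$ factors one must be careful that the exponents stay $O(n)O(m)$ rather than blowing up, and that the number of variables $n$ (as opposed to $m$) enters only in the places where the cited estimates force it. A secondary subtlety is ensuring the squarefree part is taken with respect to the correct (main) variable for each factor and that this does not secretly reintroduce a dependence on $\deg_{x_j}$ that we have not bounded; since the system is squarefree, $\der{t}$ is regular modulo $\sat{T}$ and the relevant iterated resultants are nonzero, so this step is well-defined, but the degree accounting must reflect it. Everything else is a routine, if tedious, propagation of the three basic estimates already recorded at the start of the section.
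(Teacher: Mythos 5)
Your proposal follows essentially the same route as the paper's proof: form the iterated resultants $r_i=\res(t_i,r_{i+1},y_i)$ starting from each $g\in P\cup\{\der{t}\mid t\in T\}$, propagate the degree and height bounds stated at the beginning of Section~\ref{sec:BPRTD}, cost each resultant via the determinant estimate of~\cite{HS98}, and charge the primitive/squarefree extraction to the gcd estimate of~\cite{DST88}, which is where the ${\hbar_R}^3$ comes from. The only slip is your claim that each Sylvester matrix has size $O(\delta_R)$ --- its size is $\deg_{y_i}(t_i)+\deg_{y_i}(r_{i+1})$, hence up to $O(2^m{\delta_R}^m)$ --- but this is harmless since $(2^m{\delta_R}^m)^{O(n)}$ is absorbed into the $(2\delta_R)^{O(n)O(m)}$ factor of the claimed bound.
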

\begin{proof}
%There are $\ell+m$ iterated resultants to be computed, namely  
% $\res(\der{t},T)$,  $\res(h,T)$, where $t \in T$ or $h \in P$.
%Denote by $g$ anyone of such $\der{t}$ or $h$. 
%The total degree and coefficient size of $\der{t}$ are bounded over by
%${{\delta}_R}$ and ${{\hbar}_R}+\log({{\delta}_R})$ respectively.
Define $G := P \cup \{ \der{t} \, \mid \, t \in T  \}$.
We need to compute the $\ell+m$ iterated resultants $\res(g,T)$,
for all $g \in G$.
Let $g \in G$.
Observe that the total degree and height of $g$
are  bounded by 
${{\delta}_R}$ and ${{\hbar}_R}+\log({{\delta}_R})$ respectively.
Define 
$r_{m+1}:=g$, \ldots, 
$r_i:=\res(t_i,r_{i+1},y_i)$, \ldots,
$r_1 :=\res(t_1,r_{2},y_1)$.
Let $i \in \{ 1, \ldots, m \}$.
Denote by ${\delta}_i$ and  ${{\hbar}_i}$ the total degree and
height of $r_i$, respectively.
Using the complexity estimates stated at the beginning 
of this section, we have
${{\delta}_i} \leq 2^{m-i+1}{{\delta}_R}^{m-i+2}$ 
and
${{\hbar}_i} \leq 2{{\delta}}_{i+1}({{\hbar}_{i+1}} +n\log({{\delta}}_{i+1}+1))$.
Therefore, we have 
${{\hbar}}_{i}\leq (2{{\delta}_R})^{O(m^2)}n^{O(m)}{{\hbar}_R}$.
From these size estimates, one can deduce that each resultant $r_i$ 
(thus the iterated resultants) can be computed within 
$(2{{\delta}_R})^{O(mn)+O(m^2)}n^{O(m)}{{\hbar}_R}^2$ bit operations, 
by the complexity of computing a determinant 
stated at the beginning of this section.

Hence, the product of all iterated resultants has total
degree and height bounded
by $(\ell+m)2^{m-1}{{\delta}_R}^{m}$ 
and
$(\ell+m)(2{{\delta}_R})^{O(m^2)}n^{O(m)}{{\hbar}_R}$, respectively. 
Thus, the primitive and squarefree part of this product 
can be computed within 
$(n\ell+nm)^{O(n)} (2{{\delta}_R})^{O(n)O(m)}{{\hbar}_R}^3$ bit operations, 
based on the complexity of  a polynomial
gcd computation stated at the beginning of this section.
\end{proof}

\begin{Proposition}
		\label{prop:lazy-rtd}
From the Kalkbrener triangular decomposition $E_1, \ldots, E_e$ of  
Proposition~\ref{Prop:foobar}, 
a lazy triangular decomposition of $f_1 = \cdots = f_m =0$
can be computed
in $\left({\delta^{n^2}n4^n}\right)^{O(n^2)}\hbar^{O(1)}$  bit operations.
Thus, a lazy triangular decomposition of this system 
is computed from the input polynomials in singly exponential time
w.r.t.\ $n$, counting operations in {\Q}.
\end{Proposition}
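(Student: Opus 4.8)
The plan is to bound the cost of each step of Algorithm~\ref{Algo::3-step} in terms of the parameters of the output of Proposition~\ref{Prop:foobar}, and then feed those bounds into Theorem~\ref{thm: complexity-bp} and into Renegar's quantifier elimination bound. First I would record the data size of the regular chains $E_i$: by Proposition~\ref{Prop:foobar}, each polynomial $e_{i,j}$ has total degree at most $4\delta^{2m}+\delta^m = O(\delta^{2m})$ and height $O(\delta^{2m}(m\hbar + dm\log\delta + n\log n))$, and the number of chains $e$ is bounded by the B\'ezout number, hence by $\delta^{O(n)}$ (more crudely $\delta^n$); all of these are absorbed into a bound of the form $\delta^{O(n)}$ on degree and the number of chains, with heights that are $\hbar^{O(1)}$ up to a $\delta^{O(n)}$ factor. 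The computation of $E_1,\dots,E_e$ itself costs $\delta^{O(m^4)O(n)}$ operations in $\Q$, which is already of the claimed shape $\left(\delta^{n^2}n4^n\right)^{O(n^2)}\hbar^{O(1)}$.

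Next I would apply Theorem~\ref{thm: complexity-bp} to each regular system $R_i = [E_i,\emptyset]$. Here $\ell = 0$, $m = n-d \le n$, and the relevant degree parameter is $\delta_{R_i} = O(\delta^{2m})$, so the border polynomial $bp_i$ has total degree bounded by $m\,2^{m-1}\,\delta_{R_i}^{m} = \delta^{O(nm)}2^{O(m)} = \left(\delta^{O(n)}4^n\right)^{O(n)}$, and is computed within $(nm)^{O(n)}(2\delta_{R_i})^{O(n)O(m)}\hbar_{R_i}^3$ bit operations; substituting $\delta_{R_i}=O(\delta^{2m})$ and $\hbar_{R_i} = O(\delta^{2m}(m\hbar+\cdots))$ keeps this of the form $\left(\delta^{n^2}n4^n\right)^{O(n^2)}\hbar^{O(1)}$, and the factor $e \le \delta^{O(n)}$ for summing over all chains is again absorbed.

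The last and most delicate step is the quantifier elimination call, line 4 of the algorithm: solve $\exists\y\,(bp_i(\u)\neq 0,\ E_i(\u,\y)=0)$. I would invoke Renegar's algorithm~\cite{Ren92}: with a fixed number of blocks of quantifiers (here one existential block in $m$ variables and $d$ free variables), input consisting of $O(m+1)$ polynomials in $n$ variables of degree $\Delta := \delta^{O(n)}$ and height $\hbar' := \hbar^{O(1)}\delta^{O(n)}$, Renegar's bound gives a running time of $(\text{number of polys}\cdot\Delta)^{O(m)O(d)}$ arithmetic operations on integers of bit size $(\hbar')^{O(1)}\Delta^{O(mn)}$ — i.e.\ $\Delta^{O(nm)}\le \delta^{O(n)\cdot O(n^2)}$ operations, with output formula of comparable degree and size. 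Converting bit complexity to the stated $\Q$-operation count (and noting $m, d \le n$) puts everything under the umbrella $\left(\delta^{n^2}n4^n\right)^{O(n^2)}\hbar^{O(1)}$. Finally, by the proof of Algorithm~\ref{Algo::3-step} already given, the $R_i=[\QQ_i,E_i,\emptyset]$ with $\QQ_i\neq\false$ form a lazy triangular decomposition, so correctness is inherited and only the complexity bookkeeping above is new.

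I expect the main obstacle to be the last step: pinning down the exact form of Renegar's bound for a \emph{single} existential block (so that the exponent is $O(m)O(d) = O(n^2)$ rather than something worse), and tracking the height blow-up through the QE call carefully enough to land inside $\hbar^{O(1)}$ times a purely $\delta,n$-dependent factor. The algebraic-preprocessing steps (Proposition~\ref{Prop:foobar}, Theorem~\ref{thm: complexity-bp}) are comparatively routine substitutions; since the paper explicitly states the estimates are ``not sharp and just meant to reach a singly exponential bound,'' I would keep all constants implicit and only verify that no step introduces a doubly exponential dependence in $n$.
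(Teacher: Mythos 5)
Your proposal follows essentially the same route as the paper: use Proposition~\ref{Prop:foobar} for the size of the $E_i$, Theorem~\ref{thm: complexity-bp} for the border polynomials, and Renegar's bound with exponent $O(dm)=O(n^2)$ for the QE step $\exists\y\,(bp_i\neq 0,\ E_i=0)$, with correctness inherited from the proof of Algorithm~\ref{Algo::3-step}. The only slip is that the degree fed into Renegar's bound should be that of the full pre-regular system $[\{bp_i\}_{\neq},E_i,\emptyset]$, i.e.\ $\delta^{O(n^2)}$ because of $bp_i$, not $\Delta=\delta^{O(n)}$; this is harmless since $(\delta^{O(n^2)})^{O(dm)}$ still lands inside the stated $\left(\delta^{n^2}n4^n\right)^{O(n^2)}\hbar^{O(1)}$.
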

\begin{proof}
For each $i \in \{1 \cdots e\}$, 
let $bp_i$ be the border polynomial of $[E_i,\emptyset]$ and
let $\hbar_{R_i}$ (resp. $\delta_{R_i}$) be the height 
(resp. the total degree) bound
of the polynomials in the pre-regular 
semi-algebraic system $R_i=[\{bp_i\}_{\neq}, E_i, \emptyset]$.
According to Algorithm~\ref{Algo::3-step}, the remaining task is to solve 
the {\small QE} problem $\exists\y(bp_i(\u)\neq 0, E_i(\u, \y)=0)$ for
each $i \in \{1 \cdots e\}$, 
which can be solved within 
$\left((m+1)\delta_{R_i}\right)^{O(dm)}\hbar_{R_i}^{O(1)}$ bit operations,
based on the results of~\cite{Ren92}.
The conclusion 
 follows from the size estimates in Proposition \ref{Prop:foobar} and
Theorem \ref{thm: complexity-bp}. 
\end{proof}
%%%%%%%% end of complexity %%%%%%%%%%%%%%

%%
%%
%%%%%%%% FPS %%%%%%%%%%%%%%
\section{Quantifier Elimination by RRC}
\label{sec:FPS}

In the last two sections, we saw that in order 
to compute a triangular decomposition of 
a semi-algebraic system, 
a key step is to solve the following quantifier 
elimination problem: 
\begin{equation}
\label{eq:RRC0}
\exists\y(B(\u)\neq 0, T(\u, \y)=0, P(\u, \y)>0),
\end{equation}
where $[B_{\neq}, T, \PP]$ is a pre-regular semi-algebraic 
system of $\Q[\u, \y]$. 
This problem is an instance of the so-called
{\em real root classification} ({\small RRC})~\cite{YX05}.
In this section, we show how to solve this problem
when $B$ is what we call a {\em fingerprint polynomial set}.

\smallskip\noindent{\small \bf Fingerprint polynomial set.}
Let $R:=[B_{\neq}, T, \PP]$ be a pre-regular semi-algebraic system of $\Q[\u,\y]$.
Let $D \subset \Q[\u]$.
Let $dp$ be
the product of all polynomials in $D$.
We call  $D$  a {\em fingerprint polynomial set} ({\small FPS}) of $R$ if:
\begin{itemizeshort}
\item[$(i)$] for all $\alpha\in\R^d$, for all $b\in B$ we have:\\
  $~~~~~dp(\alpha)\neq0 \Longrightarrow b(\alpha)\neq0$,
\item[$(ii)$] for all ${\alpha}, {\beta} \in \R^d$ with ${\alpha} \neq {\beta}$
and $dp(\alpha)\neq0$, $dp(\beta)\neq0$, 
if the signs of $p(\alpha)$  and $p(\beta)$ are 
the same for all $p \in D$, then 
$R(\alpha)$ has real solutions if and only if $R(\beta)$ does.
\end{itemizeshort}

Hereafter,
we present a method to construct an {\small FPS}
based on projection operators of {\small CAD}.

\smallskip\noindent{\small \bf Open projection operator~\cite{adam00, brown01}.}
Hereafter in this section, let $\u=u_1<\cdots<u_d$ be ordered variables.
Let $p \in \Q[\u]$ be non-constant.
Denote by $\factor{p}$ the set
of the non-constant irreducible factors of $p$.
For $A\subset\Q[\u]$, define $\factor{A} = {\cup}_{p \in A} \, \factor{p}$.
Let ${C}_{d}$ (resp. $C_0$) be the set of the
polynomials in $\factor{p}$ with
main variable equal to (resp. less than) $u_d$. 
The {\em open projection operator} (${\rm oproj}$) w.r.t.\ variable $u_d$
maps $p$ to
a set of polynomials of $\Q[u_1,\ldots,u_{d-1}]$ defined below:
$$
\begin{array}{r}
\oproj{p,u_d}:= C_0\cup\bigcup_{f,g \in C_d, \; f\neq g}\factor{\res(f, g, u_d)}\\
              \cup\bigcup_{f\in C_d} \factor{ \init{f, u_d}\cdot\discrim{f, u_d}}.
\end{array}
$$
Then, we define $\oproj{A,u_d}:=\oproj{\Pi_{p\in A} \, p,u_d}$.

\smallskip\noindent{\small \bf Augmentation.}
Let $A \subset \Q[\u]$ and $x \in \{ u_1, \ldots, u_d \}$.
Denote by $\der{A,x}$  the {\em derivative closure} of $A$ w.r.t.\ $x$, 
that is,
$
\der{A,x}:=  {\cup}_{p \in A} \,  \{\mr{der}^{(i)}(p,x)\mid 0\leq i<\deg(p,x)  \}.
$
The \emph{open augmented projected factors} of $A$ is denoted by $\oaf(A)$
and defined as follows. 
Let $k$ be the smallest positive integer such that
$A \subset {\Q}[u_1, \ldots, u_k]$ holds.
Denote by $C$ the set
$\factor{\der{A, u_k}}$; we have
\begin{itemizeshort}
\item if $k = 1$, then $\oaf(A):=C$;
\item if $k>1$,  then 
		$\oaf(A):=C \cup \oaf(\oproj{C,u_k}).$
\end{itemizeshort}

\begin{Theorem}
\label{thm:oap}
Let $A \subset \Q[\u]$ be finite and let ${\sigma}$ be a map from
$\oaf(A)$ to the set of signs $\{ -1, +1 \}$.
Then the set 
$S_d := {\cap}_{p \in \oaf(A)} \, \{ u \in \R^d \mid p(u)  \, {\sigma}(p) > 0 \}$
is either empty or a connected open set in $\R^d$.
\end{Theorem}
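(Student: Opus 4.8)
**The plan is to prove Theorem~\ref{thm:oap} by induction on $k$, the number of variables actually occurring in $A$, exploiting the sign-invariance (delineability) properties that the open projection operator is designed to guarantee.**

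First I would set up the base case $k = 1$ (and more generally the situation where $d$ equals this $k$). Here $\oaf(A) = C = \factor{\der{A,u_k}}$, which in one variable is just the set of irreducible factors of $A$ together with all their derivatives up to order $\deg - 1$. Since we include the derivative closure, the real roots of the polynomials in $C$ partition $\R^1$ into finitely many open intervals and points on each of which every $p \in C$ has constant sign; each choice of sign map ${\sigma}$ then picks out a union of such open intervals, but the key point is that on any single such interval each $p$ has the sign ${\sigma}(p)$ and, because derivatives are included, the sign pattern determines the interval uniquely — hence $S_1$ is a single open interval (possibly empty), which is connected and open. I would phrase this carefully using the fact that between two consecutive roots of $\prod_{p \in C} p$ the sign vector is constant and distinct from that on any other such interval.

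For the inductive step $k > 1$, I would write $\oaf(A) = C \cup \oaf(\oproj{C, u_k})$ where $C = \factor{\der{A, u_k}}$ is a set of polynomials in $u_1, \ldots, u_k$ and $\oproj{C,u_k} \subset \Q[u_1,\ldots,u_{k-1}]$. Split the sign map as ${\sigma} = ({\sigma}', {\sigma}'')$ restricting to $C$ and to $\oaf(\oproj{C,u_k})$ respectively. By the induction hypothesis, the set $S' := \bigcap_{p \in \oaf(\oproj{C,u_k})} \{ (u_1,\ldots,u_{k-1}) \mid p \, {\sigma}''(p) > 0 \}$ is empty or a connected open set in $\R^{k-1}$; if empty we are done. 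Otherwise, the defining property of $\oproj$ — namely that it collects the initials, discriminants, and pairwise resultants of the factors in $C_k$ (those with main variable $u_k$), plus the lower polynomials $C_0$ — guarantees, by Collins' delineability theorem (Theorem~1 in~\cite{col75}, invoked as in Lemma~\ref{Lemma:prsas}), that over the connected set $S'$ the real roots of the polynomials in $C_k$ (viewed as functions of $u_k$) are continuous, non-crossing, and constant in number, so they delineate $S' \times \R$ into finitely many connected "sections" and "sectors", on each of which every polynomial of $C$ has constant sign. Because the derivative closure is built into $C$, the sign vector ${\sigma}'$ selects exactly one sector (an open set lying between two consecutive root functions or above/below all of them), which is connected and open; its projection to $S'$ is all of $S'$, and it fibers over $S'$ with connected open intervals, so the total set $S_k$ is connected and open. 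Finally $S_d$ for the full variable set: if $d > k$ the extra coordinates $u_{k+1}, \ldots, u_d$ are unconstrained, so $S_d = S_k \times \R^{d-k}$, still connected and open; if $d = k$ we are done directly.

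**The main obstacle** I expect is making the inductive step fully rigorous at the level of "the sign vector on $C$ selects a single connected sector." This requires knowing that including the derivative closure $\der{A,u_k}$ genuinely separates the sectors by their sign patterns — i.e. that two distinct sectors over the same point of $S'$ cannot carry the same sign vector on all of $C$ — which is exactly the standard argument that the sign of $p$ together with the signs of its derivatives distinguishes the regions cut out by $p$ (a polynomial cannot have a local extremum with value of one sign while a neighbouring region has the same sign unless a derivative changes sign). I would isolate this as the technical heart, citing the delineability result of~\cite{col75} for the continuity and non-crossing of the root functions over the connected base, and citing the open-CAD projection references~\cite{adam00, brown01} for the fact that $\oproj$ is precisely the right operator to make this work for \emph{open} cells. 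Everything else is bookkeeping with the recursion defining $\oaf$.
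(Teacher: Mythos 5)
Your proposal is correct and follows essentially the same route as the paper: induction on the last relevant variable, the induction hypothesis applied to $\oaf(\oproj{C,u_k})$ to get a connected open base, delineability of the product of the polynomials in $C$ over that base (guaranteed by the definition of ${\rm oproj}$), and the derivative closure to ensure a fixed sign vector selects at most one connected open sector --- the paper simply packages this last step (your ``main obstacle'') as a direct appeal to Thom's Lemma~\cite{BPR06}, in both the base case and the inductive step.
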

\begin{proof}
By induction on $d$.
When $d=1$, the conclusion follows from Thom's Lemma
~\cite{BPR06}.
Assume $d > 1$. 
If $d$ is not the smallest positive integer $k$ such that
$A \subset {\Q}[u_1, \ldots, u_k]$ holds, then
 $S_d$ can be written $S_{d-1} \times {\R}$
 and the conclusion follows by induction.
Otherwise, write  $\oaf(A)$ as $C \cup E$, 
where $C=\factor{\der{A, u_d}}$ and $E=\oaf(\oproj{C,u_d})$.
We have: $E \subset \Q[u_1,\cdots,u_{d-1}]$. 
Denote by $M$ the set 
${\cap}_{p \in E} \,   \{u \in \R^{d-1} \mid p(u) \sigma(p) > 0 \}$.
If $M$ is empty then so is $S_d$ and the conclusion is clear.
From now on assume $M$ not empty. Then, 
by induction hypothesis, $M$ is 
a connected open set in $\R^{d-1}$.
By the definition of the operator ${\rm oproj}$,
the product of the polynomials in $C$ is
delineable over $M$ w.r.t.\ $u_d$.
Moreover,  $C$ is derivative closed (may be empty) w.r.t.\ $u_d$.
Therefore ${\cap}_{p \in \oaf(A)} \, \{ u \in \R^d \mid p(u)  \, {\sigma}(p) > 0 \} \subset M\times \R$
is either empty or a 
connected open set by Thom's Lemma.
\end{proof}

\begin{Theorem}
\label{thm:dpoap}
Let $R:=[B_{\neq}, T, \PP]$ be a pre-regular semi-algebraic system of $\Q[\u,\y]$.
The polynomial set $\oaf(B)$ is a fingerprint polynomial set of $R$.
\end{Theorem}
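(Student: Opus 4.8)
The goal is to verify that $D := \oaf(B)$ satisfies the two defining conditions of a fingerprint polynomial set of $R = [B_{\neq}, T, \PP]$. Write $dp$ for the product of the polynomials in $D$. Condition $(i)$ is essentially immediate: every irreducible factor of every $b \in B$ lies in $\factor{\der{B, u_k}} \subset \oaf(B)$ by construction, so each such factor divides $dp$; hence $dp(\alpha) \neq 0$ forces $b(\alpha) \neq 0$ for all $b \in B$. The substance is condition $(ii)$.

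For $(ii)$, fix $\alpha, \beta \in \R^d$ with $dp(\alpha) \neq 0$, $dp(\beta) \neq 0$, and $\sign(p(\alpha)) = \sign(p(\beta))$ for all $p \in D$. Define a sign map $\sigma$ on $\oaf(B)$ by $\sigma(p) := \sign(p(\alpha)) \in \{-1, +1\}$; this is well defined and equals $\sign(p(\beta))$ by hypothesis. By Theorem~\ref{thm:oap}, the set $S_d := \cap_{p \in \oaf(B)} \{ u \in \R^d \mid p(u)\,\sigma(p) > 0 \}$ is either empty or a connected open subset of $\R^d$; since it contains both $\alpha$ and $\beta$, it is a nonempty connected open set, and $\alpha, \beta$ lie in the same such set. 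Now $S_d$ is contained in the complement of the hypersurface defined by the product of the polynomials in $B$ (again because $\factor{B}$'s elements all appear in $\oaf(B)$, so they are nonvanishing throughout $S_d$); in particular $S_d$ lies inside a single one of the open cells of dimension $d$ described in Lemma~\ref{Lemma:prsas}. By that lemma, the number of real zeros of $[T(\gamma), \PP(\gamma)]$ is constant as $\gamma$ ranges over that cell, hence constant on $S_d$. Therefore $R(\alpha)$ has a real solution iff this common count is positive iff $R(\beta)$ has a real solution, which is exactly $(ii)$.

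The only subtlety, and the step I expect to require the most care, is the passage from ``$\alpha$ and $\beta$ agree in sign on all of $\oaf(B)$'' to ``$\alpha$ and $\beta$ lie in the same connected component of the complement of $\prod_{b \in B} b = 0$.'' Theorem~\ref{thm:oap} gives connectedness of the cell $S_d$ cut out by the full augmented projected factor set, and since $\factor{B} \subseteq \oaf(B)$ this cell refines the cylindrical cell decomposition induced by $B$ alone; so $S_d$ sits inside one of the cells of Lemma~\ref{Lemma:prsas}, and delineability does the rest. I would state this containment explicitly and invoke Lemma~\ref{Lemma:prsas} on that cell. No nontrivial new computation is needed — the argument is a bookkeeping composition of Theorem~\ref{thm:oap} (connectedness) with Lemma~\ref{Lemma:prsas} (constancy of the real root count on each cell).
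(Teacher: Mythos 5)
Your proposal is correct and follows essentially the same route as the paper: condition $(i)$ from the inclusion $\factor{B}\subseteq\oaf(B)$, and condition $(ii)$ by using Theorem~\ref{thm:oap} to place $\alpha$ and $\beta$ in a single connected open set on which the product of the polynomials in $B$ is nonvanishing, and then invoking Lemma~\ref{Lemma:prsas} for the constancy of the real root count. The paper's proof is just a terser version of your bookkeeping; no discrepancy to report.
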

\begin{proof}
Recall that the border polynomial $bp$ of $[T, P]$
divides the product of the polynomials in $B$. 
We have $\factor{B} \allowbreak \subseteq \oaf(B)$. 
So $\oaf(B)$ satisfies $(i)$ in the definition of {\small FPS}.
Let us prove $(ii)$.
Let $dp$ be the product of the polynomials in $\oaf(B)$.
Let $\alpha, \beta \in \R^d$ 
such that both $dp(\alpha)\neq0$, $dp(\beta)\neq0$ hold 
and the signs of $p(\alpha)$ and $p(\beta)$ are equal for
all $p \in \oaf(B)$.
Then, by Theorem~\ref{thm:oap}, $\alpha$  and $\beta$ 
belong to the same connected component of $dp(\u) \neq 0$,
and thus to the same connected component of $B(\u)\neq0$.
Therefore the number of real solutions of $R(\alpha)$ and that of $R(\beta)$
are the same by Lemma~\ref{Lemma:prsas}.
\end{proof}

From now on, 
let us assume that the set $B$ in 
the pre-regular semi-algebraic system $R=[B_{\neq}, T, \PP]$
is an {\small FPS} of $R$.
We solve the quantifier elimination problem (\ref{eq:RRC0})
in three steps:
$(s_1)$ compute at least one sample point in each connected component of 
the semi-algebraic set defined by $B(\u) \neq 0$;
$(s_2)$ for each sample point ${\alpha}$ such that 
the specialized system $R({\alpha})$ possesses 
real solutions, compute the sign
of $b({\alpha})$ for each $b \in B$;
$(s_3)$ generate the corresponding quantifier-free formulas.

In practice, when the set $B$ is not an {\small FPS}, 
one adds some polynomials from ${\oaf}(B)$,
using a heuristic procedure (for instance one by one)  
until Property $(ii)$ of the definition of an {\small FPS} is
satisfied.
This strategy is implemented in Algorithm~\ref{Algo:GenerateRegularSas}
of Section~\ref{sect:Algorithm}.
%%%%%%%% end of FPS %%%%%%%%%%%%%%

%%
%%
%%%%%%%% algorithms and implementation %%%%%%%%%%%%%%
\section{Implementation}
\label{sect:Algorithm}
In this section, we present algorithms for {\sf LazyRealTriangularize}
and  {\sf RealTriangularize} that we have implemented 
on top of the {\tt RegularChains} library in {\sc Maple}.
We also provide experimental results
%%These algorithms rely on the RRC-based quantifier elimination
%%method introduced in Section~\ref{sec:FPS}.
for test problems which are available at
\url{www.orcca.on.ca/~cchen/issac10.txt}.

\begin{algorithm}
\dontprintsemicolon
\linesnumbered
\caption{\GeneratePreRegularSas{\S}\label{Algo:GeneratePreRegularSas}}
\KwIn{a semi-algebraic system $\S=[F, \NN, \PP, \HH]$}
\KwOut{
a set of pre-regular semi-algebraic 
systems\\
$[{B_i}_{\neq}, T_i, {P_i}_{>}]$, $i=1\ldots e$, such that
$
\begin{array}{rcl}
Z_{\R}(\S)&=&\cup_{i=1}^e Z_{\R}({B_i}_{\neq}, T_i,  {P_i}_{>})\\
&&\cup_{i=1}^e Z_{\R}(\sat{T_i}\cup\{\Pi_{b\in B_i}b\}, \NN,  P_{>}, \HH).
\end{array}
$
%%$
%%\begin{array}{lcl}
%%Z_{\R}(\S)&=&\cup_{i=1}^e Z_{\R}({B_i}_{\neq}, T_i, {P_i}_{>})\\
%%&&\cup_{i=1}^eZ_{\R}(F\cup\{\Pi_{b\in B_i}b\}, \NN, \PP, \HH)\\
%(2)~\dim\left(Z_{\R}(\S)\setminus\cup_{i=1}^e Z_{\R}({B_i}_{\neq}, T_i, {P_i}_{>})\right)&&
%%\end{array}
%%$\\
%~\,$(2)$ for  $i=1\ldots e$, each $p\in B_i$ is regular w.r.t $T_i$
}
%\Begin{
    $\T := \Triangularize{F}$; $\T' := \emptyset$\; 
    \For{$p\in P\cup H$}{
        \For{$T\in\T$}{
            \For{$C\in\Regularize{p, T}$}{
                {\bf if} $p\notin\sat{C}$ {\bf then} $\T':=\T'\cup\{C\}$
            }
        }
        $\T := \T'$; $\T':=\emptyset$\;
    }
    $\T :=\{[T, \emptyset]\mid T\in\T\}$; $\T' := \emptyset$\;
    \For{$p\in N$}{
        \For{$[T, N']\in\T$}{
            \For{$C\in\Regularize{p, T}$}{
                \eIf{$p \in \sat{C}$}{
                     $\T':=\T'\cup\{[C, N']\}$
                   }{
                     $\T':=\T'\cup\{[C, N'\cup\{p\}]\}$
                }
            }
        }
        $\T := \T'$; $\T':=\emptyset$\;
    }
    $\T:=\{[T, N', P, H]\mid [T, N']\in\T\}$\;
    \For{$[T, N', P, H]\in\T$}{
         $B:=\BorderPolynomial{T, N'\cup P\cup H}$\;
         output $[B, T, N'\cup P]$\;
    }
%}
\end{algorithm}

\begin{algorithm}
\dontprintsemicolon
\linesnumbered
\caption{\GenerateRegularSas{B, T, P}\label{Algo:GenerateRegularSas}}
\KwIn{$\S=[B_{\neq}, T, \PP]$, a pre-regular semi-algebraic system of $\Q[\u, \y]$, where $\u=u_1,\ldots,u_d$ and $\y=y_1,\ldots,y_{n-d}$. }
\KwOut{A pair $(D, {\cal R})$ satisfying: \\
$(1)$ $D\subset\Q[\u]$ such that $\factor{B} \subseteq D$;\\
$(2)$ ${\cal R}$ is a finite set of regular semi-algebraic systems, s.t. 
$\cup_{R \in {\cal R} }Z_{\R}(R)=Z_{\R}(D_{\neq}, T, \PP)$.
}
%\Begin{
      $D:=\factor{B\setminus \Q}$\; 
      \If{$d=0$}{
         \eIf{$\RealRootCounting{T, P}=0$}{
              return $(D, \emptyset)$\;     
         }{
              return $(D, \{[true, T, P]\})$\;
         }
      }
%      $D:=\ProjectionFactorSet{B}\setminus B$\; 
      \While{true}{
            $S:=\SamplePoints{D, d}$; $G_0:=\emptyset$; $G_1:=\emptyset$\;
            \For{$s\in S$}{
                 \eIf{$\RealRootCounting{T(s), P(s)}=0$}{
                      $G_0:=G_0\cup\{\GenerateFormula{D, s}\}$
                    }{
                      $G_1:=G_1\cup\{\GenerateFormula{D, s}\}$
                 }
            }
            \eIf{$G_0\cap G_1=\emptyset$}{
                 $\QQ:=\ReviseFormula{G_1}$\;
				 {\bf if} $\QQ=false$ {\bf then}  return $(D,\emptyset)$\;
				 {\bf else} return $(D,\{[\QQ, T, P]\})$\;
              }{
                select a subset $D' \subseteq \oaf(B) \setminus D$
                 by some heuristic method\;
                 $D := D \cup D'$\;
            }
      }

%}
\end{algorithm}

\smallskip\noindent{\small \bf Basic subroutines.}
For a zero-dimensional squarefree regular system 
$[T, P]$, \RealRootCounting{T, P}~\cite{xz06} 
returns the number of real zeros of $[T,\PP]$.
For $A\subset\Q[u_1,\ldots,u_d]$ and a point $s$ of $\Q^d$
such that $p(s) \neq 0$ for all $p\in A$, \GenerateFormula{A, s}
computes a formula $\wedge_{p\in A}~(p \, \sigma_{p,s}>\!0)$, where $\sigma_{p,s}$ is defined 
as $+1$ if $p(s)>0$ and $-1$ otherwise.
For a  set of formulas $G$,
%% and  a set of polynomials $B$,
\ReviseFormula{G} computes a logic formula $\Phi$ 
%%which is 
equivalent to
the disjunction of the formulas in $G$.

\smallskip\noindent{\small \bf Proof of
Algorithm~\ref{Algo:GeneratePreRegularSas}}.
Its termination is obvious.
Let us prove its correctness.
By the specification of {\sf Triangularize} and {\sf Regularize},
at line $16$, we have
$$
Z(F, P_{\neq}\cup H_{\neq})=\cup_{[T, N', P, H]\in\T} Z(\sat{T}, P_{\neq}\cup \HH).
$$
Write $\cup_{[T, N', P, H]\in\T}$ as $\cup_T$. Then we deduce that
$$
Z_{\R}(F, \NN,\PP,\HH)=\cup_T Z_{\R}(\sat{T}, \NN, \PP, \HH).
$$
%%where for each $T$, the associated $N'$
%%satisfies: each $p\in N'$ is regular modulo $\sat{T}$;
%%each $p\in N\setminus N'$ is zero modulo $\sat{T}$.
For each $[T, N', P, H]$,
at line $19$, we generate
a pre-regular semi-algebraic system $[\BB,T,N'_{>}\cup\PP]$.
By Proposition~\ref{Proposition:borderpolynomial}, we have
$$
\begin{array}{l}
Z_{\R}(\sat{T}, \NN, \PP, \HH)=\\
Z_{\R}(B_{\neq}, T, N'_{>}\cup P_{>})
\cup Z_{\R}(\sat{T}\cup\{\Pi_{b\in B}b\}, \NN, \PP, \HH),
\end{array}
$$
which implies that
$$
\begin{array}{rcl}
Z_{\R}(\S)&=&\cup_T Z_{\R}(B_{\neq}, T, N'_{>}\cup P_{>})\\
&&\cup_T Z_{\R}(\sat{T}\cup\{\Pi_{b\in B}b\}, \NN, \PP, \HH).
\end{array}
$$
So Algorithm~\ref{Algo:GeneratePreRegularSas}
satisfies its specification.

\begin{algorithm}
\dontprintsemicolon
\linesnumbered
\caption{\SamplePoints{A, k}\label{Algo:SamplePoints}}
\KwIn{$A \subset \Q[x_1,\ldots,x_k]$ is a finite set of non-zero polynomials}
\KwOut{ A finite subset of $\Q^k$ contained in \\
       $(\Pi_{p\in A} \, p)\neq0$
        and having a non-empty intersection with 
        each connected component  of $(\Pi_{p\in A} \, p)\neq0$. }
%\Begin{
%    $P:=\factor{F}$\;
    \uIf{$k=1$}{
         return one rational point from each 
        connected component  of  $\Pi_{p\in A} \,p \neq 0$\;
    }
    \Else{
%         let $P_n:=\{p\mid p\in P, \mvar{p}=x_n\}$; $P:=P\setminus P_n$\;
         $A_k:=\{p\in A \mid \mvar{p}=x_k\}$; $A':=\oproj{A,x_k}$\;
         \For{$s\in\SamplePoints{A', k-1}$}{              
              Collect in a set $S$ one rational point from each 
        connected component  of $\Pi_{p\in A_k}p(s,x_k) \neq 0$;\;
              {\bf for} $\alpha\in S$ {\bf do} output $(s, \alpha)$
         }
    }
%}
\end{algorithm}

\begin{algorithm}
\dontprintsemicolon
\linesnumbered
\caption{\LazyRealTriangularize{\S}\label{Algo:LazyRealTriangularize}}
\KwIn{a semi-algebraic system $\S=[F, \NN, \PP, \HH]$}
\KwOut{a lazy triangular decomposition of $\S$
}
%\Begin{
%%	$\mc{D} := \emptyset$\;
    $\T := \GeneratePreRegularSas{F, N, P, H}$\;
    \For{$[B, T,  P']\in\T$}{
         %output $(B, \GenerateRegularSas{B, T, P'})$\;
		 ($D, \mc{R}) = \GenerateRegularSas{B, T, P'}$\;
                 \lIf{$\mc{R}\neq\emptyset$}{ output $\mc{R}$}\;
%%		 $\mc{D}= \mc{D} \cup \mc{R}$\;
    }
%%	\Return $\mc{D}$\;
\end{algorithm}

\begin{algorithm}
\dontprintsemicolon
\linesnumbered
\caption{\RealTriangularize{\S}\label{Algo:RealTriangularize}}
\KwIn{ a semi-algebraic system $\S=[F, \NN, \PP, \HH]$}
\KwOut{a triangular decomposition of $\S$}
%\Begin{
%%	$\mc{D} := \emptyset$\;
    $\T := \GeneratePreRegularSas{F, N, P, H}$\;
    \For{$[B, T,  P']\in\T$}{
          ($D, \mc{R}) = \GenerateRegularSas{B, T, P'}$\;
%%		 $\mc{D}= \mc{D} \cup \mc{R}$\;
         \lIf{$\mc{R}\neq\emptyset$}{ output $\mc{R}$}\;
         \For{$p\in D$}{
%%		 $\mc{D}= \mc{D} \cup \RealTriangularize{F\cup\{p\}, N, P, H}$\;
         output $\RealTriangularize{F\cup\{p\}, N, P, H}$\;
         }
    }
%%	\Return $\mc{D}$\;
%}
\end{algorithm}

\smallskip\noindent{\small \bf Proof of
Algorithms~\ref{Algo:GenerateRegularSas} and~\ref{Algo:SamplePoints}}.
By the definition of ${\rm oproj}$, 
Algorithm~\ref{Algo:SamplePoints} terminates and satisfies its specification.
By Theorem~\ref{thm:dpoap}, 
${\oaf}(B)$ is an {\small FPS}.
Thus, by the definition of an {\small FPS},
Algorithm~\ref{Algo:GenerateRegularSas}
terminates and satisfies its specification.

\smallskip\noindent{\small \bf Proof of Algorithm~\ref{Algo:LazyRealTriangularize}}.
Its termination is obvious.
Let us prove the algorithm is correct.
Let $R_i$, $i=1\cdots t$ be the output. 
By the specification of each sub-algorithm, 
each $R_i$ is a regular semi-algebraic system and we have:
$$\cup_{i=1}^tZ_{\R}(R_i)\subseteq Z_{\R}(\S).$$
Next we show that 
there exists an ideal $\mc{I} \subseteq {\Q}[\x]$,
whose dimension is less than $\dim(Z(F, P_{\neq}\cup H_{\neq}))$
and 
such that 
$Z_{\R}(\S) \setminus \cup_{i=1}^t Z_{\R}(R_i) \subseteq Z_\R(\mc{I})$
holds.

At line $1$, 
by the specification of Algorithm~\ref{Algo:GeneratePreRegularSas}, 
we have
$$
\begin{array}{rcl}
Z_{\R}(\S)&=&\cup_T Z_{\R}(B_{\neq}, T,  P'_{>})\\
&&\cup_T Z_{\R}(\sat{T}\cup\{\Pi_{b\in B} \, b\}, \NN,  P_{>}, \HH).
\end{array}
$$
At line $3$, 
by the specification of Algorithm~\ref{Algo:GenerateRegularSas},
for each $B$, we compute a set $D$ such that $\factor{B}\subseteq D$
and 
$$
\cup_T Z_{\R}(D_{\neq}, T,  P'_{>}) = \cup_{i=1}^t Z_{\R}(R_i)
$$
both hold.
Combining the two relations together, we have
$$
\begin{array}{rcl}
Z_{\R}(\S)&=&\cup_T Z_{\R}(R_i)\\
&&\cup_T Z_{\R}(\sat{T}\cup\{\Pi_{p\in D}\,p\}, \NN,  P_{>}, \HH).
\end{array}
$$
Therefore, the following relations hold
$$
\begin{array}{l}
~~~Z_{\R}(\S)\setminus \cup_{i=1}^t Z_{\R}(R_i)\\
\subseteq \cup_T Z_{\R}(\sat{T}\cup\{\Pi_{p\in D}\,p\}, \NN, \PP, \HH) \\
%\subseteq \cup_T Z(\sat{T}\cup\{\Pi_{d\in D}d\}, P\cup H)\\
\subseteq Z_{\R}\left( \cap_T\left(\sat{T}\cup\{\Pi_{p\in D}\,p\}\right) \right).
\end{array}
$$
Define 
$${\cal I}=\cap_T\left(\sat{T}\cup\{\Pi_{p\in D}\,p\}\right).$$
Since each $p\in D$ is regular modulo $\sat{T}$, 
we have
$$
\dim({\cal I}) < \dim\left(\cap_T \sat{T}\right)\leq \dim (Z(F, P_{\neq}\cup \HH)).
$$
So all $R_i$ form a lazy triangular decomposition of $\S$.
$\square$

\smallskip\noindent{\small \bf Proof of Algorithm~\ref{Algo:RealTriangularize}}.
For its termination, it is sufficient to prove that
there are only finitely many recursive calls to {\sf RealTriangularize}. 
Indeed, if $[F, N, P, H]$ is the input of a call to {\sf RealTriangularize}
then each of the immediate recursive calls takes $[F\cup\{p\}, N, P, H]$
as input, where $p$ belongs to the 
set $D$ of some pre-regular semi-algebraic system $[D_{\neq}, T, \PP]$.
Since $p$ is regular (and non-zero) modulo $\sat{T}$ we have:
$$\langle F \rangle \subsetneq \langle F\cup\{p\} \rangle.$$
Therefore, the algorithm terminates 
by the ascending chain condition on ideals of ${\Q}[\x]$.
The correctness of Algorithm~\ref{Algo:RealTriangularize}
follows from the specifications of the 
sub-algorithms.
$\square$

\smallskip\noindent{\small \bf Table 1}.
Table 1 summarizes the notations used in Tables 2 and 3. Tables 2 and 3
demonstrate benchmarks running in {\sc Maple} 14 ${\beta} \ 1$, 
using an Intel Core 2 Quad {\small CPU} (2.40{\small GHz}) 
with 3.0{\small GB} memory.
%The timings are counted in seconds and 
%the time-out is set as 1 hour.
The timings are in seconds and 
the time-out is 1 hour.

\begin{figure}
\centering
{\textbf{Table 1} Notations for Tables 2 and 3}
\medskip
\newline
{\small
\begin{tabular}{l|l}
\hline
symbol & meaning\\\hline
%sys & a semi-algebraic system of $\Q[\x]$\\
\#e  & number of equations in the input system\\
\#v & number of variables in the input equations\\
d & maximum total degree of an input equation\\
G & {\sf Groebner:-Basis} ({\sf plex} order) in {\sc Maple}\\
T & {\sf Triangularize} in {\RegularChains} library of {\sc Maple}\\
%% T(s) & {\sf Triangularize} with option `radical=yes'\\
LR & {\sf LazyRealTriangularize} implemented in {\sc Maple}\\
R &{\sf RealTriangularize} implemented in {\sc Maple}\\
Q & {\sc Qepcad~b}\\
$>$ $1h$& computation does not complete within 1 hour\\
FAIL & {\sc Qepcad~b} failed due to prime list exhausted\\\hline
\end{tabular}
}
\end{figure}

\smallskip\noindent{\small \bf Table 2}.
The systems in this group  involve equations only.
We report the running times for a 
triangular decomposition of the input algebraic variety
and a lazy triangular decomposition of the corresponding real variety.
%JHD this duplicates a sentence earlier.
%We also use the {\sc Maple}
%function {\tt Groebner:-Basis} for computing
%lexicographical Gr\"obner bases
%since a triangular decomposition of an algebraic
%set is essentially a factored Gr\"obner basis~\cite{Laz92}.
These illustrate the good performance of 
our tool.
%%{\sf LazyRealTriangulrize}.

\begin{figure}
\centering
{\textbf{Table 2} Timings for varieties}
\medskip
\newline
{\small
\begin{tabular}{c|c|c|c|c}
\hline
system                  &\#v/\#e/d& G            & T     & LR\\\hline
Hairer-2-BGK            & 13/ 11/ 4 & 25         &1.924  & 2.396\\%2.292 & 
%Wang168                 & 7/ 5/ 3   & 11         &0.812  & 1.192\\%3.528 & 
Collins-jsc02           & 5/ 4/ 3   & 876        &0.296  & 0.820\\%0.732 & 
Leykin-1                & 8/ 6/ 4   & 103        &3.684  & 3.924\\%3.944 & 
8-3-config-Li           & 12/ 7/ 2  & 109        &5.440  & 6.360\\%6.248 & 
Lichtblau               & 3/ 2/ 11  & 126        &1.548  & 11\\%12.416 & 
Cinquin-3-3             & 4/ 3/ 4   & 64         &0.744  & 2.016\\%1.968 & 
Cinquin-3-4             & 4/ 3/ 5   & $>$ $1h$   &10     & 22\\%19.805 & 
DonatiTraverso-rev      & 4/ 3/ 8   & 154        &7.100  & 7.548\\%8.752 & 
Cheaters-homotopy-1     & 7/ 3/ 7   & 3527       &174    & $>$ $1h$\\%$>$ $1h$ & 
hereman-8.8             & 8/ 6/ 6   & $>$ $1h$   &33     & 62\\%62.351 & $>$ $1h$ for projection based
L                       &12/ 4/ 3   & $>$ $1h$    & 0.468 & 0.676\\
dgp6                    &17/19/ 2   & 27         & 60    & 63\\
dgp29                   & 5/ 4/ 15  & 84         & 0.008 & 0.016\\
\hline
\end{tabular}
}
\end{figure}

\smallskip\noindent{\small \bf Table 3}.
The examples in this table 
are quantifier elimination problems
%%or parametric system solving 
and most of them
involve both equations and inequalities.
%For each example, 
%the input of our tools is 
%the semi-algebraic system consisting of the constraints.
We provide
%% the timings
%%for computing a triangular decomposition
%%of the equations,
the timings for computing a lazy and a full
triangular decomposition of the 
corresponding semi-algebraic system
and the timings for solving the quantifier
elimination problem via {\sc Qepcad b}~\cite{Bro03} 
(in non-interactive mode).
Computations complete with our tool on 
more examples than with {\sc Qepcad b}.

\begin{figure}
\centering
{\textbf{Table 3} Timings for semi-algebraic systems}
\medskip
\newline
{\small
\begin{tabular}{c|c|c|c|c|c}
\hline
system        &\#v/\#e/d&    T    & LR      & R    & Q\\\hline
BM05-1        &4/ 2/ 3  & 0.008   & 0.208  & 0.568   &86\\
BM05-2        &4/ 2/ 4  & 0.040   & 2.284  &$>1h$  &FAIL\\
Solotareff-4b &5/ 4/ 3  & 0.640   & 2.248  &924      &$>1h$\\
Solotareff-4a &5/ 4/ 3  & 0.424   & 1.228  &8.216    &FAIL\\
putnam        &6/ 4/ 2  & 0.044   & 0.108  &0.948    &$>1h$\\
MPV89         &6/ 3/ 4  & 0.016   & 0.496  & 2.544   &$>1h$\\
IBVP          &8/ 5/ 2  & 0.272   & 0.560  &12   &$>1h$\\
%Lafferriere35 &4/ 2/ 5  & 0.044   & 0.184  &1.364    &9.976\\
Lafferriere37 &3/ 3/ 4  & 0.056   & 0.184  &0.180    &10\\
Xia           &6/ 3/ 4  & 0.164   & 191    & 739     &$>1h$\\
SEIT          &11/ 4/3  & 0.400   & $>1h$&$>1h$  &$>1h$\\
p3p-isosceles &7/ 3/ 3  & 1.348   & $>1h$&$>1h$  &$>1h$\\
p3p           &8/ 3/ 3  & 210     & $>1h$&$>1h$  &FAIL\\
Ellipse       &6/ 1/ 3  & 0.012   & $>1h$&$>1h$  &$>1h$\\
\hline
\end{tabular}
}
\end{figure}

\smallskip\noindent{\small \bf Remark}. 
The output of our tools 
is a set of regular semi-algebraic systems,
which is different than that of {\sc Qepcad b}.
We note also that our tool is more effective for
systems with more equations than inequalities.
%%%%%%%% end of algorithms and implementation %%%%%%%%%%%%%%

%%%%%%%% acknowledgments %%%%%%%%%%%%%%
\bigskip\noindent{\bf Acknowledgments.}
The authors would like to thank the referees for their valuable
remarks that helped to improve the presentation of the work.
%%%%%%%% end of acknowledgments %%%%%%%%%%%%%%

\newpage

%%
%%
%\scriptsize
\small
%%%%%%%% references %%%%%%%%%%%%%%

%%%%%%%% end of references %%%%%%%%%%%%%%
\end{document}